\documentclass{article}

\usepackage[all]{xy}
\usepackage{natbib}
\usepackage[margin=1in]{geometry}
\usepackage{fullpage}
\usepackage{graphicx}
\usepackage{hyperref}
\usepackage{tikz}
\usepackage{enumitem}
\usepackage{xcolor}
\usepackage[utf8]{inputenc}
\usepackage{booktabs}
\hypersetup{
   breaklinks,%
   colorlinks=true,%
   linkcolor=black,%
   urlcolor=black,%
   citecolor=[rgb]{0,0,0.45}
}

\def\bar{\overline}

\def\script#1{\mathcal{#1}}

\def\E{\mathbf{E}}

\newcommand{\poly}{\mathop{\mathrm{poly}}}%

\def\sA{\script{A}}

\def\sC{\script{C}}
\def\sD{\script{D}}
\def\sF{\script{F}}

\def\sI{\script{I}}
\def\sJ{\script{J}}

\def\sN{\script{N}}

\def\sR{\script{R}}

\def\sX{\script{X}}

\def\opt{\textsc{OPT}}

\usepackage{amsmath}
\usepackage{amssymb}
\usepackage{amsthm}
\newtheorem{theorem}{Theorem}[section]

\newtheorem{lemma}[theorem]{Lemma}

\newtheorem{corollary}[theorem]{Corollary}
\newtheorem{claim}[theorem]{Claim} 

\newtheorem{definition}[theorem]{Definition}

%\theoremheaderfont{\bf} \theorembodyfont{\upshape}%
\newtheorem{remark}[theorem]{Remark}%

% Proof environment

% \theoremheaderfont{\em}%
% \theorembodyfont{\upshape}%
% \theoremstyle{nonumberplain}%
% \theoremseparator{}%
% \theoremsymbol{\myqedsymbol}%
% \newtheorem{proof}{Proof:}%

% \newcommand{\myqedsymbol}{$\square$}

%\theoremstyle{plain}%
% \newtheorem{theorem}{Theorem}[section]
% \newtheorem{remark}[theorem]{Remark}
% \newtheorem{claim}[theorem]{Claim}
% \newtheorem{definition}[theorem]{Definition}

%\include{notation}

\newcommand{\loc}{\operatorname{loc}}

\usepackage{algorithmic}            
\usepackage{algorithm}

\begin{document}

\title{Fair Representation Clustering with Several Protected Classes}

\author{Zhen Dai\thanks{University of Chicago. Supported by grants from NSF and DARPA. Email: \texttt{zhen9@uchicago.edu}} \and Yury Makarychev\thanks{Toyota Technological Institute at Chicago (TTIC). Supported by NSF awards CCF-1718820, CCF-1955173, and CCF-1934843. Email: \texttt{yury@ttic.edu}} \and Ali Vakilian\thanks{Toyota Technological Institute at Chicago (TTIC). Supported by NSF award CCF-1934843. Email: \texttt{vakilian@ttic.edu}}}
\date{}
\maketitle
\begin{abstract}
We study the problem of fair $k$-median where each cluster is required to have a fair representation of individuals from different groups. In the fair representation $k$-median problem, we are given a set of points $X$ in a metric space. Each point $x\in X$ belongs to one of $\ell$ groups. Further, we are given fair representation parameters $\alpha_j$ and $\beta_j$ for each group $j\in [\ell]$. We say that a $k$-clustering $C_1, \cdots, C_k$ fairly represents all groups if the number of points from group $j$ in cluster $C_i$ is between $\alpha_j |C_i|$ and $\beta_j |C_i|$ for every $j\in[\ell]$ and $i\in [k]$. The goal is to find a set $\sC$ of $k$ centers and an assignment $\phi: X\rightarrow \sC$ such that the clustering defined by $(\sC, \phi)$ fairly represents all groups and minimizes the $\ell_1$-objective $\sum_{x\in X} d(x, \phi(x))$.

We present an $O(\log k)$-approximation algorithm that runs in time $n^{O(\ell)}$. Note that the known algorithms for the problem either (i) violate the fairness constraints by an additive term or (ii) run in time that is exponential in both $k$ and $\ell$. We also consider an important special case of the problem where $\alpha_j = \beta_j = \frac{f_j}{f}$ and $f_j, f \in \mathbb{N}$ for all $j\in [\ell]$. For this special case, we present an $O(\log k)$-approximation algorithm that runs in $(kf)^{O(\ell)}\log n + \poly(n)$ time.  
\end{abstract}

\section{Introduction}
Algorithmic decision making is widely used for high-stake decisions like college admissions~\citep{marcinkowski2020implications} and criminal justice~\citep{chouldechova2017fair,kleinberg2018human}. While automated decision-making processes are often very efficient, there are serious concerns about their fairness.  Consequently, in recent years, there has been an extensive line of research on fairness of algorithms and machine learning approaches~\citep{chouldechova2018frontiers, kearns2019ethical,kleinberg2017inherent}.    

In this paper, we study the {\em ``fair representation''} clustering problem proposed in the seminal work of~\citet{chierichetti2017fair}.  
%The problem is formalized by~\cite{bera2019fair} and generalizes the notion of fairness introduced in the seminal work of~\cite{chierichetti2017fair}. 
The notion, which is motivated by the concept of {\em disparate impact}~\citep{feldman2015certifying}, requires that each protected class has an approximately equal representation in each cluster. In many scenarios, a different set of benefits are associated with each cluster of points output by the algorithm. Then, it is desirable that different groups of individuals (e.g., men or women) receive the benefits associated with each of the clusters (e.g., mortgage options) in similar proportions. Further, clustering is often used for feature engineering. In this case, we need to ensure that the generated features are fair; that is, they neither introduce new nor amplify existing biases in the data set.        
%In this setting, which is also known as {\em clustering with balanced clusters}, the goal is to partition a set of points that belong to different {\em protected classes} into {\em balanced} clusters; every protected class must be fairly represented in all clusters.  
Now, we formally define the notion of representation fairness for clustering.
%\ynote{should we call it fair representation $k$-median?}
\begin{definition}[fair representation clustering]\label{def:fairness}
Given a set of points $X$ that come from $\ell$ different groups $X_1, \dots, X_\ell$, a $k$-clustering $C_1, \cdots, C_k$ of $X$ is {\em fair} with respect to the fairness requirement specified by $\{\alpha_j, \beta_j\}_{j\in \ell}$ if
\begin{align}\label{eq:fair-cond}
    \forall i\in [k], j\in [\ell],\quad \alpha_j |C_i| \le |C_i \cap X_j| \le \beta_j |C_i| 
\end{align}

In fair $k$-median with fairness requirement $\{\alpha_j, \beta_j\}_j$, the goal is to find $k$ clusters $C_1, \dots, C_k$ and $k$ centers, $c_1,\dots, c_k$ (one center for each cluster) so that the clustering $C_1,\dots, C_k$ is fair with respect to the fairness requirement and the $\ell_1$-objective $\sum_{i=1}^k\sum_{x\in C_i} d(x, c_i)$ is minimized. We will say that points in $C_i$ are assigned to center $c_i$. We let $\phi$ be the assignment function that maps each point $u$ to the center $u$ is assigned to. To specify a solution, it is sufficient to provide the set of centers and $\phi$.
\end{definition}
\citet{bera2019fair} and~\citet{bercea2019cost} independently introduced this notion of fairness, which generalizes the notions studied by~\cite{chierichetti2017fair,schmidt2019fair, backurs2019scalable, ahmadian2019clustering}. \citeauthor{bercea2019cost} presented a constant factor approximation algorithm for the fair representation clustering with the general $\ell_p$-objective. 
%and any set of parameters $\ell, \{\alpha_j, \beta_j\}_{j\in [\ell]}$. 
However, their algorithm returns a clustering that satisfies the fairness requirements with some additive error. When the maximum number of groups/classes to which a point may belong is $\Delta$, the additive error/violation is at most $4\Delta + 3$; in the most common case of $\Delta=1$, the additive violation is at most $3$. \citeauthor{bercea2019cost} also gave constant factor approximation algorithms for a variety of clustering objective (including $k$-median and $k$-means) that violate the fairness requirement only by a small additive value.
More recently,~\citet{bandyapadhyay2020coresets} designed algorithms that compute a constant factor approximation for fair $k$-median and $k$-means that run in time $(k\Delta)^{O(k\Delta)} \poly(n)$; these algorithms do not violate the fairness constraints. 
However, one of the main questions in the area of fair clustering still remains open:
\begin{quote}
\textit{What is the best polynomial-time approximation algorithm for representation fair clustering?}  
\end{quote}
%\textit{Is there a constant-factor polynomial-time approximation algorithm for representation fair clustering?.}  

We also study \textit{exact} fair representation clustering, which is an important special case of the problem.
\begin{definition}[exact fairness]\label{def:exact}
Assume that we are given a set of points $X$ that come from $\ell$ disjoint groups: $X = X_1 \cup \cdots \cup X_\ell$. A $k$-clustering $C_1, \cdots, C_k$ of $X$ is {\em exactly fair} if
\begin{align}\label{eq:exact-fair-cond}
    \forall i\in [k], j\in [\ell],\quad |C_i \cap X_j| = \frac{|X_j|}{|X|} \cdot |C_i| 
\end{align}
We define \textit{fairlet} as a minimal size non-empty set of points that is exactly fair. Note that all fairlets have the same size (when $X$ is fixed). Denote this size by $f$. Further, for every $j\in [\ell]$, let $f_j$ denote the number of points from group $j$ in any fairlet. 
\end{definition}
This notion was previously studied for (1) $k$-center~\citep{rosner18privacy,bercea2019cost} and (2) $k$-clustering with $\ell_p$-objective on balanced instances (instances with $f_j=1$ for all $j$ and $f=\ell$)~\citep{bohm2020fair}. In all these special cases of exact fairness, the fair clustering problem admits a constant factor approximation.  
%by~\cite{rosner18privacy,bercea2019cost} for $k$-center, and by~\cite{bohm2020fair}. \avnote{TODO: need to check the following} Even for this restricted notion, it is not known whether there exist a polynomial time constant-factor approximation for fair $k$-median. 

\paragraph{Other Related Work.}
Fair clustering is an active domain of research and by now it has been studied under various standard notions including both group fairness and individual fairness, e.g., ~\citep{chierichetti2017fair,bera2019fair,huang2019coresets, kleindessner2019fair,jones2020fair,chen2019proportionally, micha2020proportionally,jung2019center,mahabadi2020individual, chakrabarty2021better,vakilian2021improved,brubach2020pairwise, kleindessner2020notion,ghadiri2020fair, abbasi2020fair,makarychev2021approximation,chlamtavc2022approximating,esmaeili2020probabilistic,esmaeili2021fair}.

\subsection{Our Results} In this paper, we study the fair representation $k$-median problem and give an $O(\log k)$-approximation for it that runs in time $n^{O(\ell)}$.
Importantly, we get a polynomial time algorithm for every fixed $\ell$ -- this is the case in most practical settings of interest for fairness applications. In fact, we design an algorithm that can handle arbitrary fairness profiles (see below for the definitions). Further, we design a much faster algorithm for fair $k$-median with exact fairness constraints and small $f$, where $f$ is the size of a fairlet. It runs in time $(kf)^{O(\ell)}\log n + poly(n)$. We emphasize that even in the case $\ell= O(1)$, all previous results on fair representation clustering with more than two protected groups, notably~\cite{bera2019fair,bercea2019cost,bandyapadhyay2020coresets}, either violate the fairness constraints by additive terms or run in time that is exponential in $k$. In this paper, we present first polynomial time approximation algorithms for fair representation with $\ell = O(1)$ multiple protected classes that satisfy the fairness requirement with no additive violations (see Theorem~\ref{thm:main} and Theorem~\ref{thm:exact-fair}).

\paragraph{General Representation Fairness.}
Our main algorithm has three steps: {\em location consolidation}, {\em approximation the metric by a distribution of tree metrics}, and finally {\em solving fair clustering on a tree}. 

In the first step, we run an existing constant factor approximation algorithm for $k$-median to find a set of $k$ centers $\sC=\{c_1,\dots, c_k\}$. Next, we move each point to its closest center $c_i$ in $\sC$ in the constructed solution. In other words, we reduce the initial instance of size $n$ (with possibly $n$ different locations) to an instance of size $n$ with exactly $k$ locations. (i.e., we may have multiple data points mapped/moved to each location.)

In the second step, we use the metric embedding technique by~\citet{fakcharoenphol2004tight} to approximate the reduced instance by a tree metric with expected distortion $O(\log k)$. As the reduced instance, this instance also has at most $k$ different locations. Additionally, the metric on these locations is a tree metric.

Finally, in the third step, we use dynamic programming (DP) to find a fair assignment of $n$ points located at $k$ different locations to $k$ centers. The DP runs in time $n^{O(\ell)}$. 
%Note that in the most realistic or canonical setting when the number of groups is a {\em constant}, our algorithm runs in polynomial time. %\ynote{Ali please check}

We note that that our first step is very similar to that used by~\citet{bera2019fair} and~\citet{bercea2019cost}; they first find a not-necessarily-fair clustering of the data points and then reassign the points so as to ensure that the clustering is approximately fair. In the context of $k$-median, the idea of approximating the input metric with a distribution of dominating trees was introduced by~\citet{bartal1998approximating}; this approach was recently used by~\citet{backurs2019scalable} in their approximation algorithm for a different variant of fair representation clustering with 2 groups. Our dynamic programming algorithm is novel and very different from DP algorithms previously used for solving $k$-median on trees (see e.g., \citep{KarivHakimi,Tamir96,Angelidakis17}).

\begin{theorem}\label{thm:main}
There exists a randomized $O(\log k)$-approximation algorithm for fair representation $k$-median that runs in $n^{O(\ell)}$ time.
\end{theorem}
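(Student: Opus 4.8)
The plan is to follow the three-step outline already advertised in the paper: consolidate locations, embed into a tree, and solve the tree instance exactly by dynamic programming. Concretely, I would first run a known $O(1)$-approximation for (unconstrained) $k$-median on $X$ to obtain a center set $\sC = \{c_1, \dots, c_k\}$, and snap each point $x$ to its nearest center in $\sC$. By the triangle inequality, an optimal fair solution $(\sC^\star, \phi^\star)$ of cost $\opt$ gives a fair solution on the snapped instance of cost at most $\opt + \sum_x d(x, \sC) \le \opt + O(1)\cdot\opt = O(1)\cdot\opt$ (the snapping distance is at most the unconstrained optimum, itself at most $\opt$); conversely any fair solution on the snapped instance lifts back to $X$ at an additive cost of the same quantity. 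So it suffices to $O(\log k)$-approximate the fair problem on an instance supported on the $k$ locations $c_1,\dots,c_k$, each carrying a multiset of points with known per-group counts.

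Next I would invoke the \citet{fakcharoenphol2004tight} embedding to sample a tree metric $T$ on these $k$ locations that dominates the original metric and has expected distortion $O(\log k)$. Since the fairness constraints depend only on the assignment $\phi$ (which points go to which center), not on the metric, a fair solution for the $k$-location instance is feasible for $T$ and vice versa; the $\ell_1$ cost changes by a factor $O(\log k)$ in expectation. It remains to solve fair $k$-median \emph{exactly} on a tree whose leaves (or nodes) host the $k$ location-multisets, with $k$ centers to be opened among these locations. Here a subtlety is that the $O(\log k)$-distortion HST may have more than $k$ nodes, but only $k$ of them are ``real'' locations where points sit and centers may be opened; the Steiner nodes just structure the DP.

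The heart of the argument is the dynamic program. Root the tree and process it bottom-up. The key observation enabling a small state is that, because the fairness condition~\eqref{eq:fair-cond} is \emph{scale-invariant} in $|C_i|$, whether a partial assignment to a center $c_i$ can be completed to a fair cluster is governed by a bounded amount of information about the counts routed so far. For a subtree, I would let the DP state record, for each center already ``committed'' to lie inside that subtree, enough data to later check $\alpha_j|C_i| \le |C_i\cap X_j|\le \beta_j|C_i|$; the crucial point is that for each of the $\ell$ groups and each center the relevant quantity is a single integer bounded by $n$, and — this is the part I'd need to argue carefully — only $O(\ell)$ centers' worth of such ``active'' information needs to be carried across any tree edge, because a center whose subtree is fully processed contributes a closed constraint (a linear inequality in the remaining global counts) rather than a free variable. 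Combined with the global budget of which of the $k$ locations are chosen as centers, this yields $n^{O(\ell)}$ states and transitions, hence an $n^{O(\ell)}$-time exact algorithm on the tree.

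The main obstacle is precisely controlling the DP state space: naively one would track the group-count vector of every one of the $k$ clusters simultaneously, giving $n^{O(k\ell)}$, which is what prior work paid. I expect the real work is showing that the tree structure lets us ``close out'' each center's fairness constraint as soon as its subtree is exhausted and express it as an inequality among the \emph{global leftover} counts (of which there are only $\ell$), so that at any tree edge only a constant-per-$\ell$ number of centers are simultaneously ``open.'' One has to handle the fact that a cluster may draw points from both inside and outside a given subtree, and ensure the DP correctly separates the ``inside'' contribution (fixed, part of the edge cost) from the ``outside'' contribution (a variable to be resolved higher up), all while the feasibility check $\alpha_j|C_i|\le |C_i\cap X_j|\le\beta_j|C_i|$ couples the group counts to the \emph{total} cluster size. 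Getting this bookkeeping right — and verifying the claimed $O(\log k)$ guarantee survives the two reductions with only constant-factor and $O(\log k)$-factor losses — is where the proof's attention should go.
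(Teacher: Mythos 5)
Your steps 1 and 2 (consolidate onto the $k$ centers of an unconstrained $O(1)$-approximate $k$-median solution, then embed those $k$ locations into a dominating tree via \citet{fakcharoenphol2004tight}) are exactly the paper's reductions, and your cost accounting for them is essentially right (the paper adds only a Markov-plus-repetition step to turn the expected $O(\log k)$ distortion into a high-probability guarantee). The genuine gap is step 3, which is the heart of the theorem: you do not actually give the tree DP, and the mechanism you sketch is not the one that works. Your key claim --- that only ``$O(\ell)$ centers' worth of active information'' needs to cross any tree edge because a finished center leaves behind a closed linear inequality in the global leftover counts --- is unsubstantiated and, as stated, false: there is no bound on how many clusters simultaneously draw points from both sides of a single tree edge (all $k$ of them can), so an argument that tries to bound the number of ``open'' centers per edge cannot give $n^{O(\ell)}$.

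The paper's resolution is different: no per-center information crosses an edge at all. Because points of the same group are interchangeable and the tree cost decomposes edge by edge, the only quantity that must be carried across the boundary of a subtree $T_u$ is the \emph{net} per-group flow vector $q \in \{-n,\dots,n\}^{\ell}$ (in-points minus out-points, aggregated over all centers in $T_u$), giving $n^{O(\ell)}$ states per node. Each center's fairness constraint is closed out locally at its own node: with children $y,z$ and guessed flows $q_y,q_z$, the cluster at a non-Steiner node $u$ has profile exactly $v(u) + q - q_y - q_z$, which is required to lie in $F$, and the transition cost is $M[y,q_y]+M[z,q_z]+d(u,y)\|q_y\|_1+d(u,z)\|q_z\|_1$; one also verifies that any table value is realized by an actual assignment (net flows can always be routed, and two-way same-group flows across an edge only increase cost). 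Your proposal recognizes that the naive $n^{O(k\ell)}$ state must be compressed, but without this aggregation-by-group idea --- and the accompanying bookkeeping that charges an in-point's cost inside $T_u$ to the edges of $T_u$ and the rest to the levels above --- the claimed $n^{O(\ell)}$ bound is not established, so the proof as written is incomplete precisely where the theorem's difficulty lies. (Two smaller points: after consolidation you may simply open a center at every one of the $k$ locations, so there is no ``which locations are chosen as centers'' budget to track; and the paper handles nodes of high degree by adding Steiner nodes to make the tree binary, which your sketch would also need.)
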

Note that in practice the number of classes is usually {\em a small constant}. Then our algorithm runs in polynomial time. The problem is interesting even when $\ell$ is a fixed single-digit number.
%\avnote{virtual classes for the case a point may belong to several classes. But then how should we define the fairness requirement for the new classes?}
%Note that for the most realistic or canonical setting in which the number of classes is a {\em constant} our algorithm runs in polynomial time. Further, when $\ell = O(1)$, we can apply our approach to handle the setting where a point may belong to several protected classes. We define a ``virtual'' class for any subset of classes $L\subset [\ell]$ if there exists a point whose set of classes is equal to $L$. This will introduce at most $2^\ell$ virtual classes which is a constant when $\ell = O(1)$.     

\begin{remark}\label{remark:arbitrary-profile}
Our approach works in a more general setting with a set of {\em fair profiles} $F$, where each profile $p \in F$ is a vector of length $\ell$. A cluster $C$ is fair w.r.t. $F$ if $(|X_1 \cap C|, \dots, |X_\ell \cap C|) \in F$. For example, this general notion captures the setting in which we only need to guarantee that in each cluster, a sufficiently large fraction of members belong to one of the disadvantaged groups specified by $D \subseteq [\ell]$; $\forall C, \sum_{i\in D} |X_i \cap C| \ge \alpha \cdot |C|$. 

To the best of our knowledge, none of the previous results on fair representation clustering implies an approximation bound for this general ``fairness profile'' notion. 

Given a clustering instance and set $F$, our algorithm finds a clustering such that all clusters are fair w.r.t. $F$. Assuming the existence of a membership oracle for $F$ that runs is time $t_F$, the running time and the approximation factor are $t_F \cdot n^{O(\ell)}$ and $O(\log k)$ as in Theorem~\ref{thm:main}. In most natural scenarios the membership oracle can be implemented efficiently and the asymptotic runtime of our algorithm remains $n^{O(\ell)}$. For instance, in the aforementioned fairness requirement which guarantees the presence of disadvantaged groups in each cluster, $t_F = O(\ell)$; hence, the total runtime of our algorithm in this setting is still $n^{O(\ell)}$.
\end{remark}

\begin{remark}
We describe and analyze our algorithm for the case when each point belongs to exactly one group $X_i$. However, with a minor modification,our algorithm can also handle the case when a point may belong to multiple groups or do not belong to any group. We introduce a ``virtual'' group $Y_G$ for every $G\subset [\ell]$ such that there exists a point that belongs to groups $X_i$ with $i \in G$ and only to them. Note that new virtual groups are disjoint and cover $X$. We define the fairness constraints as follows: for every $j\in[\ell]$,
$\alpha_i |C_i| \leq \sum_{G:j\in G} |C_i\cap Y_G| \leq \beta_i|C_i|$.
By Remark~\ref{remark:arbitrary-profile}, our algorithm can handle these constraints; they are equivalent to the original constraints, since $\sum_{G:j\in G} |C_i\cap Y_G| = |C_i \cap X_j|$.
Note that now the algorithm runs in time $n^{O(\ell')}$ where $\ell'$ is the number of virtual groups (clearly $\ell' \leq 2^\ell$ but $\ell'$ may be much smaller than $2^\ell$). 
\end{remark}

\paragraph{Exact Representation Fairness.}%\avnote{TODO: mention somewhere that in this case groups are disjoint.} 
We significantly improve the running time of our algorithms when the fairness constraints are exact (see Definition~\ref{def:exact}) and each point belongs to exactly one group. 
%In particular, we show that we can always find $\poly(kf)$ points and it suffices to solve the assignment problem only on this subset. 
We first run the algorithm by~\citet{bera2019fair} that returns a set of centers and an assignment of points to these centers that ``nearly'' satisfies the fairness requirement. Next, we move each point to its assigned center. We prove that there exists an $O(1)$-approximately optimal fair assignment that only moves a set of $O(kf^2)$ points $S^*$. Lastly, we show that we can find such a set $S$ of size $O(k^2f^2)$ in polynomial time. Then, loosely speaking, we run our main algorithm on the set $S$. Since $S$ has only $O(k^2 f^2)$ points, the algorithm runs in time $(kf)^{O(\ell)}$.  

\begin{theorem}\label{thm:exact-fair}
There exists a randomized $O(\log k)$-approximation algorithm for exactly fair $k$-median that runs in time $(kf)^{O(\ell)} \log n + \poly(n)$, where $f$ is the fairlet size.
\end{theorem}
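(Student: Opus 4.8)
The plan is to bootstrap Theorem~\ref{thm:main}: rather than running the $n^{O(\ell)}$ algorithm on all of $X$, we first reduce to an instance with only $\poly(k,f)$ ``relevant'' points and then apply (a minor variant of) the three-step algorithm of Theorem~\ref{thm:main} to that small instance. Concretely: (i) run the algorithm of~\citet{bera2019fair} to get $k$ centers $\sC=\{c_1,\dots,c_k\}$ and an assignment $\psi\colon X\to\sC$ of cost $O(\opt)$ that is exactly fair up to an additive $O(1)$ error in every cluster and group (here we use that an exactly fair clustering is in particular near-fair in the sense of~\citeauthor{bera2019fair}, so the near-fair optimum is at most $\opt$); (ii) \emph{consolidate} by moving each point $x$ to $\psi(x)$, which changes the cost of every assignment by at most $\sum_x d(x,\psi(x))=O(\opt)$ by the triangle inequality and leaves $n$ points sitting at $k$ locations; (iii) identify a set $S\subseteq X$ with $|S|=O(k^2f^2)$ such that some $O(1)$-approximately optimal exactly fair assignment of the consolidated instance displaces no point outside $S$; (iv) run the FRT-embedding plus dynamic-programming algorithm of Theorem~\ref{thm:main} on the consolidated instance, but allow the DP to reassign only the points of $S$, pinning every other point to the center at its own location. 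Since the number of ``free'' points is $(kf)^{O(1)}$, the DP runs in $(kf)^{O(\ell)}$ time; embedding the $k$ locations costs $\poly(n)$; and the exact-fairness checks inside the DP involve cluster sizes of magnitude $O(n)$, contributing the extra $\log n$ factor. The approximation bound follows by bookkeeping: the output has consolidated cost $O(\log k)$ times the best $S$-restricted fair assignment, which is $O(\log k)\cdot O(\opt)$, and un-consolidating adds only another $O(\opt)$.

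The crux is step (iii), which we isolate as a structural lemma: in the consolidated instance there is an exactly fair assignment $\phi^\star$ of cost $O(\opt)$ with $|\{x:\phi^\star(x)\neq\psi(x)\}|=O(kf^2)$. To prove it, start from any exactly fair assignment $\phi'$ of the consolidated instance of cost $O(\opt)$; one exists because mapping each optimal fair cluster $C_j^\star$ to the center of $\sC$ nearest to its center $c_j^\star$ and merging preserves exact fairness (a disjoint union of exactly fair sets is exactly fair, since all share the global ratios $f_j/f$) and costs $O(\opt)$ by an averaging argument over $\psi$. Then repeatedly apply a \emph{cancellation} move: whenever $\phi'$ sends a group-$j$ point out of location $c_i$ \emph{and} a group-$j$ point into $c_i$, redirect them so the first stays at $c_i$ and the second takes the vacated slot; this keeps the assignment exactly fair, never increases cost (triangle inequality), and upon termination no location is simultaneously an origin and a destination for the same group. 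At that point the number of displaced points equals $\sum_{i,j}\bigl||\,\text{final count of group }j\text{ at }c_i| - |\psi^{-1}(c_i)\cap X_j|\,\bigr|$; combining cancellation with a re-rounding of the target cluster sizes toward $|\psi^{-1}(c_i)|/f$ — which stays within $O(f)$ of the near-fair profile, using $\ell\le f$ since every fairlet meets all $\ell$ groups — bounds this by $O(kf^2)$. Finally, since two points at the same location and in the same group are interchangeable, $S$ need not contain $\phi^\star$'s actual displaced points: for each location $c_i$ and group $j$ it suffices to place $\min(|\psi^{-1}(c_i)\cap X_j|,\,O(kf^2))$ points into $S$, which is computable in polynomial time and yields $|S|=O(k^2f^2)$.

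I expect the main obstacle to be making step (iii) control the displaced-point \emph{count} and their total \emph{cost} simultaneously: the re-rounding construction keeps the count at $O(kf^2)$ but says nothing about cost, while the cancellation construction inherits the $O(\opt)$ cost of $\phi'$ but a priori moves many points, so the two must be interleaved carefully — e.g., performing only those cancellations and ``un-moves'' that drive cluster sizes toward the near-fair profile while monotonically decreasing cost. A secondary point needing care is the bookkeeping when a point may lie in several groups or when clusters may be empty, which changes the notion of a valid fair profile and hence the $O(f)$ slack in the re-rounding; and one must verify that pinning the non-$S$ points does not harm the $O(\log k)$ distortion of the tree DP, which holds because pinned points contribute a fixed additive term to every cluster's size and to every group count.
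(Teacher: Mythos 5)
Your overall architecture is the same as the paper's (run the \citet{bera2019fair} assignment, consolidate points onto the $k$ centers, argue that some $O(1)$-approximate exactly fair assignment moves only $(kf)^{O(1)}$ points, then run the tree-embedding DP restricted to a small set $S$), and your steps (i), (ii) and (iv) are essentially correct. The problem is your step (iii), which is precisely where the paper does its real work, and your argument for it has a genuine gap. Your cancellation moves are the paper's ``Transformation I'': they only guarantee that no cluster simultaneously imports and exports points of the same group. After that, the number of displaced points is governed by the per-cluster, per-group difference between the final exactly fair profile and the initial $3$-approximately fair profile, and nothing bounds that difference: a cheap exactly fair assignment may change cluster sizes drastically (e.g., shift many whole fairlets between nearby centers at negligible cost), so the cancellation construction inherits cost $O(\opt)$ but an unbounded move count. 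Conversely, your ``re-rounding toward the near-fair sizes'' controls the count but comes with no argument that those target profiles are achievable by an exactly fair assignment of cost $O(\opt)$; the ``un-moves'' needed to push sizes back toward the initial profile break exact fairness at both the origin and the destination unless compensated, and you give no mechanism that compensates while keeping the cost bounded. You explicitly flag this interleaving as unresolved --- but it is the heart of the theorem, not a detail.

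For comparison, the paper closes this gap without ever forcing the final cluster sizes to be near the initial ones. It first pays a factor $2$ to pass to \emph{restricted} assignments in which every new fairlet of a destination cluster contains at least one point originally in that cluster (Lemma~\ref{lem:restricted}, proved by a convex-combination argument that reassigns an offending fairlet wholesale to one of its source clusters). It then exploits the interchangeability of same-group points at the same location to re-order which points are exported --- problematic points first, then fairlet by fairlet (``Transformation II'') --- so that an optimal restricted assignment moves whole fairlets and every new fairlet contains a problematic point (Lemma~\ref{lem:structure1}). Since each cluster has fewer than $4f$ problematic points (Lemma~\ref{lem:problematic}), there are at most $4kf$ new fairlets, hence at most $4kf^2$ moved points, and a suitable $S$ of size $O(k^2f^2)$ (whole fairlets plus problematic points) is computable (Corollary~\ref{cor:structure9}). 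Without an argument of this kind, or some other proof of the same structural statement, your steps (iii)--(iv) do not go through. A minor additional correction: the $\log n$ factor in the running time comes from repeating the random tree embedding $\Theta(\log n)$ times to get a high-probability guarantee (Claim~\ref{clm:assignment-tree-metric}), not from the size of the fairness checks inside the DP.
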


%Our proposed {\em``assignment correcting''} approach for the exact fairness can be of an independent interest. In particular, an interesting question is whether this approach can be used for the general notion of {\em fair representation} to achieve an improved runtime.   

% The work that is most relevant to our paper is a result by~\cite{backurs2019scalable}. They considered a setting where the points come from two disjoint groups and $\alpha_1 = \alpha_2 = 1/2-\eps, \beta_1 = \beta_2 = 1/2+\eps$ for a given parameter $\eps$. They designed an nearly linear time $O(d\log n)$-approximation algorithm for fair $k$-median in $d$-dimensional Euclidean space. Also, it is not hard to see that their approach can be slightly modified to achieve a polynomial time $O(\log n)$-approximation algorithm for fair $k$-median in any metric space. Here, we present a simple, yet powerful  approach that achieves an $O(\log k)$-approximating for fair $k$-median with multiple groups and a more general notion of fairness in any metric space. 

\section{Preliminaries}\label{sec:prelim}

\paragraph{Embedding into a distribution of dominating trees.} Assume that we are given a metric space $(M, d)$. We will consider trees $\chi$ on $M$ (whose edges have positive lengths) and shortest-path metrics $d_\chi$ that they define on $M$. We say that $(M, d)$ is $\alpha$-approximated by a probabilistic distribution $\sD_{\sX}$ of dominating trees $\chi$ with distortion $\alpha \geq 1$ if 
\begin{itemize}
    \item Every tree metric $d_\chi$ in the support $\sX$ of $\sD_{\sX}$ dominates metric $(M, d)$; i.e., $d_{\chi}(u, v) \geq d(u, v)$ for all $u,v\in M$.
    \item $\E_{\chi\sim \sD_{\sX}}[d_{\chi}(u,v)]\leq \alpha \cdot d(u,v)$ for all $u,v \in M$.
\end{itemize}
% Next, we define a useful notion of {\em hierarchically well-separated tree}.
% \begin{definition}\label{def:hst}
% For $\gamma> 1$, a weighted tree $T$ is a $\gamma$-hierarchically well-separated tree ($\gamma$-HST) if it satisfies the following properties,
% \begin{itemize}
%     \item The edge weight from any node to each of its children is the same.
%     \item For any node other than the root, the edge weight of the node to its children is at most $1/\gamma$ times the edge weight of the node to its parent.
% \end{itemize}
% \end{definition}
A key component in our algorithms is the following result by Fakcharoenphol et al.\  \cite{fakcharoenphol2004tight} (see also \cite{bartal1996probabilistic} by Bartal). \begin{theorem}[\citet{fakcharoenphol2004tight}]\label{thm:prob-metric-embedd}
Every metric space $(M, d)$ can be approximated by a distribution of dominating trees with distortion at most $O(\log |M|)$. Moreover, we can sample from the distribution of dominating trees in polynomial time. 
\end{theorem}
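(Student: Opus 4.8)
The idea is to reduce, in $\poly(n)$ time, to an instance in which only $O(k^2f^2)$ points are ``movable,'' and then run (a mildly modified version of) the algorithm of Theorem~\ref{thm:main} on that instance. There are four steps: (i) a near-fair solution plus location consolidation; (ii) a structural lemma that an $O(1)$-approximately optimal exactly-fair clustering of the consolidated instance can be obtained from the near-fair one by re-routing only $O(kf^2)$ points; (iii) a $\poly(n)$-computable superset $S$, $|S|=O(k^2f^2)$, of those points; (iv) freezing $X\setminus S$ and running the tree dynamic program on $S$.

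\textbf{Step 1 (near-fair solution and consolidation).} Run the polynomial-time algorithm of \citet{bera2019fair} with $\alpha_j=\beta_j=f_j/f$ to get centers $c_1,\dots,c_k$ and an assignment $\phi_0$ of cost $O(1)\cdot\opt$ whose clusters $C_i^0=\phi_0^{-1}(c_i)$ satisfy $\bigl||C_i^0\cap X_j|-\tfrac{f_j}{f}|C_i^0|\bigr|\le\gamma$ for an absolute constant $\gamma$ (here $\Delta=1$ since the groups are disjoint). Relocate each $x$ to $c_{\phi_0(x)}$; call the result the \emph{consolidated} instance — it has only $k$ distinct locations. Two observations. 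First, for the optimal exactly-fair clustering $(\sC^\star,\phi^\star)$ of the original instance, keeping its centers and assignment in the consolidated metric is still exactly fair and costs $\sum_x d(c_{\phi_0(x)},\phi^\star(x))\le \mathrm{cost}(\phi_0)+\opt=O(1)\opt$ by the triangle inequality; hence the consolidated instance has a min-cost exactly-fair clustering $\phi^\dagger$ with $\mathrm{cost}(\phi^\dagger)=O(1)\opt$. Second, the $\phi_0$-clustering itself costs $0$ in the consolidated metric but is only near-fair.

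\textbf{Step 2 (structural lemma — the crux).} I claim there is an exactly-fair clustering $\psi$ of the consolidated instance with $\mathrm{cost}(\psi)=O(1)\opt$ that differs from $\phi_0$ on only $O(kf^2)$ points. Starting from $\phi^\dagger$ and using that \emph{re-routing a point back to its home center is free}, I would repeatedly cancel ``matched fairlet bundles'': whenever, for an ordered pair $(i,i')$, the points with home $i'$ currently assigned to $c_i$ contain an exactly-fair subset $B_1$ and the points with home $i$ assigned to $c_{i'}$ contain an exactly-fair subset $B_2$ of the same size, re-assign $B_1\to c_{i'}$ and $B_2\to c_i$; each cluster then loses and gains an exactly-fair set, so exact fairness is preserved, and the cost only decreases. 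Once no such cancellation is possible, I combine this with the near-fairness of $\phi_0$ and a size-rounding count: each cluster size must be pushed to a multiple of $f$ ($O(f)$ moves per cluster, globally consistent up to $O(kf)$ extra moves), and then each of the $\ell\le f$ group quotas must be matched exactly ($O(f)$ further moves per group per cluster), so the residual re-routings touch $O(kf^2)$ points. \emph{This step is the main obstacle}: the delicate point is maintaining the cardinality bound $O(kf^2)$ and the cost bound $O(1)\opt$ simultaneously, which I expect to handle via an exchange argument that short-cuts any ``long'' re-routing path in the residual transportation by moving an equal-size, group-wise-equal, no-more-expensive fairlet bundle directly.

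\textbf{Step 3 (polynomial-size superset).} We cannot identify $S^\star$ itself, but a superset is enough for the DP. For every ordered pair $(i,i')$ put into $S$ the $\Theta(f^2)$ points of $C_i^0$ closest to $c_{i'}$ (equivalently, for every $(i,i',j)$ the $\Theta(f)$ closest points of $C_i^0\cap X_j$). Then $|S|=O(k^2f^2)$ and $S$ is computed in $\poly(n)$ time, and by the same exchange argument any bundle that $\psi$ re-routes from home $i$ to $c_{i'}$ can be replaced by an equal-cardinality, group-wise-equal bundle drawn from these closest points without increasing cost; thus there is a valid $\psi$ whose moved points all lie in $S$.

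\textbf{Step 4 (freeze and run the tree DP).} Freeze every $x\in X\setminus S$ at $c_{\phi_0(x)}$ and solve, up to a factor $O(\log k)$, for the assignment of $S$ subject to the global exact-fairness constraints: embed the $k$ consolidated locations into a distribution of dominating trees via Theorem~\ref{thm:prob-metric-embedd} (a tree with $O(k\log n)$ nodes after standard scaling of the aspect ratio), then run the dynamic program behind Theorem~\ref{thm:main}, which now only branches on how the $O(k^2f^2)$ points of $S$ split among subtrees and centers — the frozen points contribute fixed offsets to each cluster's size and group counts, which the DP checks when it ``closes'' a center. The DP has $|S|^{O(\ell)}=(kf)^{O(\ell)}$ states per node, for total running time $(kf)^{O(\ell)}\log n$; adding the $\poly(n)$ of Steps 1 and 3 yields the claimed bound. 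Finally, the output is exactly fair by construction, and its expected cost is $O(\log k)\cdot\mathrm{cost}(\psi)=O(\log k)\cdot\opt$, which proves the theorem.
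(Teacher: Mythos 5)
Your proposal does not address the statement you were asked to prove. The statement is Theorem~\ref{thm:prob-metric-embedd}, the metric embedding result of \citet{fakcharoenphol2004tight}: every $n$-point metric space can be approximated by a distribution of dominating tree metrics with expected distortion $O(\log |M|)$, with polynomial-time sampling. A proof of this would require constructing the random hierarchical decomposition (e.g., the FRT scheme with a random permutation of the points and a random radius scale, yielding low-diameter decompositions at geometrically decreasing scales), verifying that the resulting tree metric dominates $d$, and bounding the expected stretch of each pair $(u,v)$ by summing, over scales, the probability that $u$ and $v$ are first separated at that scale times the corresponding tree distance — this is where the $O(\log n)$ factor arises. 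None of this appears in your write-up. (In the paper itself this theorem is quoted as a known external result and not reproved.)

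What you actually sketched is an argument for Theorem~\ref{thm:exact-fair} (the $(kf)^{O(\ell)}\log n + \poly(n)$ algorithm for exact fairness): near-fair assignment via \citet{bera2019fair}, consolidation, a structural lemma bounding the number of re-routed points by $O(kf^2)$, a superset $S$ of size $O(k^2f^2)$, and the tree DP. Moreover, your Step~4 explicitly invokes Theorem~\ref{thm:prob-metric-embedd} as a black box, so even read charitably the proposal is circular with respect to the statement at hand: it uses the theorem rather than proving it. You should either supply the FRT construction and its distortion analysis, or note that the theorem is cited from prior work and is not something this paper establishes.
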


\section{Algorithms}\label{sec:algorithm}
Now we present an $O(\log k)$-approximation algorithm for fair representation $k$-median that runs in time $n^{O(\ell)}$, where $n$ is the number of points and $\ell$ is the number of groups.

% \begin{theorem}\label{thm:main}
% There exists an $O(\log k)$-approximation algorithm for fair representation $k$-median that runs in $n^{O(\ell)}$ time.
% \end{theorem}

%Consider the following formulation of the problem. We are given a metric space $(X, d)$. We call points of $X$ locations. There is a number of data points at every location. Each data point belongs to one of $\ell$ groups (we can also handle the case when a data point may belong to several groups, but for now we assume that each point belongs to exactly one group). Importantly, we assume that $\ell$ is a constant; the running time of the algorithm will depend exponentially on $\ell$. 

In our algorithms, we are going to transform instances to ``simpler'' instances by moving points to new locations. It will be helpful to distinguish between data points and their locations. Our algorithms will not change data points and their memberships in groups, but will change their locations. Formally, we think of abstract data points that are mapped to points/locations in a metric space; our algorithms will modify this mapping between data points and their locations. We call this process a reassignment. We denote the set of data points by $X$ and the set of locations by $L$. Initially $L = X$ and every point $x\in X$ is assigned to location $x$, but this will change when we transform the instance. We denote the location of point $x$ w.r.t. instance $\mathcal{I}$ by $\loc(x) = \loc_{\mathcal{I}}(x)$.

Now, if two data points at the same location belong to the same group, then they are interchangeable for our algorithm. Thus, instead of storing the actual data points, the algorithm will only store the number of points from each group at every location.

Denote ${\sR}_{\geq 0} \equiv \{0,\dots, n\}^{\ell}$ and $\sR \equiv \{-n,\dots, n\}^{\ell}$. For each $q \in L$ and $j \in [\ell]$, let $v_j(q)$ be the number of data points from group $j$ at location $q$. We call vector $v(q) = (v_1(q), \dots, v_\ell(q)) \in {\sR}_{\geq 0}$ the profile of location $q$.
We define the profile of a set of data points $\mathsf{S}$ as a vector $v(\mathsf{S})$ in ${\sR}_{\geq 0}$ whose $j$-th coordinate equals the number of data points from group $j$ in the set; $v(\mathsf{S}) := \sum_{q\in \mathsf{S}} v(q)$.

Consider a set of $k$ clusters $\sC$. To describe a clustering, we introduce an assignment function $r:L\times \sC \to {\sR}_{\geq 0}$. For each center $c\in \sC$, $r_j(q,c)$ denotes the number of points from group $j$ at location $q$ that we assign to $c$. Note that vector $R(c) = \sum_{q\in L} r(q,c) \in {\sR}_{\geq 0}$ specifies the number of points from each group that are assigned to center $c$. We call $R(c)$ the profile of center $c$. 
We require that for every $q \in L, \sum_{c \in \sC} r(q,c) = v(q)$, meaning that each data point at location $q$ belongs to exactly one cluster.

The fairness constraints are defined by a set of fair profiles $F \subset {\sR}_{\ge 0}$. We say that the cluster assigned to center $c$ fairly represents groups if $R(c)\in F$. In fair representation clustering with fairness requirement $\{\alpha_j, \beta_j\}_{j\in [\ell]}$
\begin{quote}
$R(c) \in F$ if and only if $\alpha_j \left\|R(c) \right\|_1 \le R_j(c) \le \beta_j \left\|R(c) \right\|_1$ for all $j\in [\ell]$. 
\end{quote}
We restate the objective of fair representation $k$-median as follows 
\begin{align}
    \sum_{c \in \sC} \sum_{q\in L} d(q,c) \cdot \|r(q,c)\|_1.\label{eq:obj}
\end{align}
Note that $\|r(q,c)\|_1$ is the total number of points at location $q$ assigned to center $c$.

\subsection{From a Clustering to a New Instance}\label{sec:cluster-instance} Consider an instance $\sJ$ and a clustering for $\sJ$ (which is not necessarily fair). As we discussed above, we can define the clustering by specifying (i) a set $\sC$ of $k$-centers and (ii) either a mapping $\phi$ from the set of data points $X$ to $\sC$ or, equivalently, an assignment function $r(q,c)$. The cost of the clustering is $$\mathbf{cost} = \sum_{x\in X} d(\loc_{\sJ}(x), \phi(x)) = \sum_{c \in \sC} \sum_{q\in L} d(q,c) \cdot \|r(q,c)\|_1.$$ 
Let us move every data point from its original location in $\sJ$ to $\phi(x)$. We get a new instance $\sJ'$. Note that $\loc_{{\sJ}'}(x) = \phi(x) \in \sC$. The profile $v'(c)$ of location $c\in \sC$ is $v'(c) = R(c)$ (the profile $v'(x)$ of a location $x\notin \sC$ is a zero vector). Instance ${\sJ}'$ has the same set of fairness profiles $F$ as the original instance $\sJ$. 

\begin{claim}\label{claim:cost:reassignment}
Consider a clustering $({\sC}', \phi')$ of $X$. Denote its cost w.r.t. instances ${\sJ}$ and ${\sJ}'$ by $cost_{\sJ}$ and $cost_{{\sJ}'}$, respectively. Then
$|cost_{{\sJ}} - cost_{{\sJ}'}| \leq \mathbf{cost}$.
\end{claim}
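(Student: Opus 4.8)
The plan is to unwind the definitions of $cost_{\sJ}$ and $cost_{{\sJ}'}$ and then apply the triangle inequality one data point at a time. Recall that the clustering $({\sC}',\phi')$ assigns each data point $x\in X$ to the center $\phi'(x)\in{\sC}'$, and that its cost is the sum over $x$ of the distance from the location of $x$ to $\phi'(x)$, where the location of $x$ depends on which instance we measure in. In $\sJ$ the location of $x$ is $\loc_{\sJ}(x)$, so $cost_{\sJ}=\sum_{x\in X} d(\loc_{\sJ}(x),\phi'(x))$. In ${\sJ}'$, by construction every data point $x$ has been moved to $\phi(x)$, i.e. $\loc_{{\sJ}'}(x)=\phi(x)$, so $cost_{{\sJ}'}=\sum_{x\in X} d(\phi(x),\phi'(x))$. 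The underlying metric $(M,d)$ and the centers (which are just locations in $M$) do not change when we pass from $\sJ$ to ${\sJ}'$; only the assignment of data points to locations does.

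Next I would bound the two sums term by term. For a fixed $x$, the triangle inequality in $(M,d)$, applied to the three points $\loc_{\sJ}(x)$, $\phi(x)$, and $\phi'(x)$, gives
\[
\bigl| d(\loc_{\sJ}(x),\phi'(x)) - d(\phi(x),\phi'(x)) \bigr| \;\leq\; d(\loc_{\sJ}(x),\phi(x)).
\]
Summing over all $x\in X$ and using $\bigl|\sum_x a_x - \sum_x b_x\bigr| \le \sum_x |a_x - b_x|$ yields
\[
\bigl| cost_{\sJ} - cost_{{\sJ}'} \bigr| \;\leq\; \sum_{x\in X} d(\loc_{\sJ}(x),\phi(x)) \;=\; \mathbf{cost},
\]
where the last equality is exactly the definition of the cost $\mathbf{cost}$ of the clustering $(\sC,\phi)$ that was used to produce ${\sJ}'$ from $\sJ$. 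This completes the argument.

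There is no real obstacle here: the only thing to be careful about is the bookkeeping of which location each data point occupies in each of the two instances, together with the observation that the passage from $\sJ$ to ${\sJ}'$ displaces $x$ by precisely $d(\loc_{\sJ}(x),\phi(x))$, so that the total displacement is exactly $\mathbf{cost}$; everything else is the triangle inequality. If one prefers to stay within the profile/assignment-function formalism, write $cost_{\sJ}=\sum_{c\in{\sC}'}\sum_{q\in L} d(q,c)\,\|r(q,c)\|_1$ (with $r$ the assignment function of $({\sC}',\phi')$ in $\sJ$) and the analogous expression for ${\sJ}'$; the per-point inequality above then becomes, after grouping data points by their $(\text{location},\text{group})$ pair, the same inequality weighted by $\|r(q,c)\|_1$, and the sum of these weights times the displacements again telescopes to $\mathbf{cost}$. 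The data-point version is cleaner, so I would present that one.
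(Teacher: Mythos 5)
Your proof is correct and is essentially the paper's own argument: both apply the triangle inequality to the three points $\loc_{\sJ}(x)$, $\loc_{{\sJ}'}(x)=\phi(x)$, and $\phi'(x)$ for each data point $x$, then sum over $x$ and identify the total displacement with $\mathbf{cost}$. Your write-up just makes the summation step and the bookkeeping of locations slightly more explicit.
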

\begin{proof}
Consider a data point $x$ and the center $\phi'(x)$ that it is assigned to by clustering $({\sC}', \phi')$. Then, by the triangle inequality, $|d(\loc_{\sJ}(x),\phi'(x)) - d(\loc_{{\sJ}'}(x),\phi'(x))| \leq d(\loc_{\sJ}(x),\loc_{{\sJ}'}(x)) = d(\loc_{\sJ}(x),\phi(x))$. Adding up this inequality over all data points $x$, we get the desired inequality $|cost_{{\sJ}} - cost_{{\sJ}'}| \leq \mathbf{cost}$.
\end{proof}

\subsection{Location Consolidation Step}\label{sec:location_consolidation}
We run a constant factor approximation algorithm for the standard $k$-median problem on our set of data points; e.g., the algorithm by Charikar et al.~\cite{charikar2002constant} or by Li and Svensson~\cite{li2016approximating}. We get $k$ centers $\bar \sC = \{\bar c_1, \dots, \bar c_k\}$ and a Voronoi assignment $\phi$ of points to the centers.

As described above, we move each data point $x$ to center $\phi(x)$. We denote the original instance by $\sI$ and the obtained instance by ${\sI}'$. We will refer to ${\sI}'$ as \textit{the reduced instance}.

\begin{claim}\label{clm:reduction}
Let $\opt_{\sI}$ and $\opt_{{\sI}'}$ be the costs of optimal fair clusterings for ${\sI}$ and ${\sI}'$, respectively. Assume that we used an $\alpha$-approximation algorithm for $k$-median in the consolidation step. Further assume that there is a $\beta_k$-approximation algorithm for fair representation $k$-median on ${\sI}'$. Then, there is an $(\alpha\beta_k + \alpha + \beta_k)$-approximation algorithm for fair representation $k$-median on $\sI$.
\end{claim}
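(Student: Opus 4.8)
The plan is to chain Claim~\ref{claim:cost:reassignment} together with the two approximation guarantees in the hypothesis of Claim~\ref{clm:reduction}. Write $\mathbf{cost}$ for the cost of the clustering $(\bar\sC,\phi)$ produced by the consolidation step. Since that step used an $\alpha$-approximation for \emph{unconstrained} $k$-median, and since the optimal fair clustering for $\sI$ is in particular a feasible $k$-median solution with the same objective value, we have $\mathbf{cost}\le\alpha\cdot\opt^{\mathrm{km}}_{\sI}\le\alpha\cdot\opt_{\sI}$, where $\opt^{\mathrm{km}}_{\sI}$ denotes the optimal (unconstrained) $k$-median cost. This is the only place the quality of the consolidation algorithm enters, and it lets us treat $\mathbf{cost}$ as ``small'' relative to $\opt_{\sI}$.

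Next I would relate $\opt_{\sI'}$ to $\opt_{\sI}$. Let $(\sC^*,\phi^*)$ be an optimal fair clustering for $\sI$, so $cost_{\sI}(\sC^*,\phi^*)=\opt_{\sI}$. Applying Claim~\ref{claim:cost:reassignment} to $(\sC^*,\phi^*)$ gives $cost_{\sI'}(\sC^*,\phi^*)\le cost_{\sI}(\sC^*,\phi^*)+\mathbf{cost}\le(1+\alpha)\opt_{\sI}$. Here one must observe that fairness of a clustering depends only on the assignment function $r(\cdot,\cdot)$ (equivalently $\phi$) and on the group memberships, neither of which is changed when we pass from $\sI$ to $\sI'$ — only locations change, and $\sI'$ inherits the same set $F$ of fair profiles. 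Hence $(\sC^*,\phi^*)$ is a feasible fair clustering for $\sI'$, and therefore $\opt_{\sI'}\le cost_{\sI'}(\sC^*,\phi^*)\le(1+\alpha)\opt_{\sI}$.

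Finally, the algorithm for $\sI$ is: run the consolidation step to get $\sI'$, run the assumed $\beta_k$-approximation for fair representation $k$-median on $\sI'$ to obtain a fair clustering $(\tilde\sC,\tilde\phi)$ with $cost_{\sI'}(\tilde\sC,\tilde\phi)\le\beta_k\cdot\opt_{\sI'}$, and output $(\tilde\sC,\tilde\phi)$. This clustering is fair for $\sI$ by the same location-independence observation. Evaluating its cost back on $\sI$ and applying Claim~\ref{claim:cost:reassignment} once more, $cost_{\sI}(\tilde\sC,\tilde\phi)\le cost_{\sI'}(\tilde\sC,\tilde\phi)+\mathbf{cost}\le\beta_k\opt_{\sI'}+\alpha\opt_{\sI}\le\beta_k(1+\alpha)\opt_{\sI}+\alpha\opt_{\sI}=(\alpha\beta_k+\alpha+\beta_k)\opt_{\sI}$, which is the claimed bound. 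There is no genuine obstacle here: the argument is a two-step triangle-inequality telescoping via Claim~\ref{claim:cost:reassignment}, and the only points requiring a moment's care are (i) that the optimal \emph{fair} cost upper-bounds the unconstrained $k$-median optimum used to bound $\mathbf{cost}$, and (ii) the bookkeeping that feasibility/fairness transfers freely between $\sI$ and $\sI'$ since the reassignment alters locations only, not memberships or the profile set $F$.
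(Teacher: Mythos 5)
Your proof is correct and follows essentially the same route as the paper's: bound the consolidation cost by $\alpha\,\opt_{\sI}$ using the fact that the unconstrained $k$-median optimum is at most the fair optimum, apply Claim~\ref{claim:cost:reassignment} to the optimal fair clustering to get $\opt_{\sI'}\le(\alpha+1)\opt_{\sI}$, and apply it once more to the $\beta_k$-approximate solution to conclude. Your extra remark that feasibility/fairness transfers between $\sI$ and $\sI'$ (since only locations change) is a fine, if implicit-in-the-paper, bookkeeping point.
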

\begin{proof}
Observe that the cost of an optimal (not necessarily fair) $k$-median clustering of $\sI$ is at most the cost of an optimal \textit{fair} $k$-median clustering of $\sI$. Thus the cost of the clustering that we use in the reduction is at most $\alpha \opt_{\sI}$.

Now consider the optimal fair clustering for $\sI$. By Claim~\ref{claim:cost:reassignment}, the cost of this clustering as a clustering of ${\sI}'$ is at most $(\alpha + 1)\, \opt_{\sI}$. Therefore, $\opt_{{\sI}'} \leq (\alpha + 1)\, \opt_{\sI}$.

Our approximation algorithm for $\sI$ is very straightforward. We simply run the $\beta_k$-approximation algorithm on instance ${\sI}'$ and output the obtained clustering as a clustering of $\sI$. Now we upper bound the cost of this clustering w.r.t. instance ${\sI}'$ and then w.r.t $\sI$. The cost w.r.t. ${\sI}'$ is at most $\beta_k \opt_{{\sI}'} \leq \beta_k(\alpha + 1)\, \opt_{\sI}$. Consequently, by Claim~\ref{claim:cost:reassignment}, the cost w.r.t. $\sI$ is at most 
$(\beta_k(\alpha + 1) + \alpha)\, \opt_{\sI}$, as required.
\end{proof}

\subsection{Embedding into Tree Metrics}
Consider restriction $\bar d$ of metric $d$ to $\bar\sC$. 
We will reduce the problem to the case when $\bar d$ is a tree metric by paying a factor of $O(\log k)$. To this end, we will approximate $\bar d$ by a distribution ${\sD}_\chi$  of dominating trees $\chi$ with distortion $O(\log k)$ and then solve the fair representation $k$-median for a number of trees $\chi$ randomly sampled from ${\sD}_\chi$.

\begin{claim}\label{clm:assignment-tree-metric}
Suppose that there exists an $\alpha$-approximation algorithm $\sA$ for fair representation $k$-median on tree metrics. Then, there is an $O(\alpha \log k)$-approximation algorithm for the reduced problem ${\sI}'$ that succeeds with high probability (i.e., $1 - n^{-c}$ for any desired constant $c$). 
\end{claim}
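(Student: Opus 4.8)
\textbf{Proof proposal for Claim~\ref{clm:assignment-tree-metric}.}
The plan is to apply the probabilistic tree embedding of Theorem~\ref{thm:prob-metric-embedd} to the metric $(\bar\sC,\bar d)$. Since $|\bar\sC|\le k$, this metric is $O(\log k)$-approximated by a distribution $\sD_\chi$ of dominating trees, and we can sample from $\sD_\chi$ in polynomial time. The algorithm is: sample a tree metric $d_\chi\sim\sD_\chi$ on $\bar\sC$; form the tree instance that keeps the same data points at the same locations in $\bar\sC$, with the same group memberships and the same family $F$ of fair profiles, but replaces $\bar d$ by $d_\chi$; run $\sA$ on this tree instance to obtain a fair clustering $(\sC_\chi,\phi_\chi)$; and output $(\sC_\chi,\phi_\chi)$ as a clustering of $\sI'$. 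Note that fairness of a clustering is a property of the assignment function alone and does not depend on the metric, so $(\sC_\chi,\phi_\chi)$ is automatically a feasible fair clustering of $\sI'$.

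The analysis rests on two standard facts. First, domination: since $d_\chi(u,v)\ge \bar d(u,v)$ for all $u,v\in\bar\sC$, the cost of \emph{any} clustering with respect to $\bar d$ is at most its cost with respect to $d_\chi$; in particular the output satisfies $\mathrm{cost}_{\bar d}(\sC_\chi,\phi_\chi)\le \mathrm{cost}_{d_\chi}(\sC_\chi,\phi_\chi)\le \alpha\cdot\opt_{d_\chi}$, where $\opt_{d_\chi}$ denotes the cost of an optimal fair clustering of the tree instance. Second, expected contraction of the optimum: let $(\sC^*,\phi^*)$ be an optimal fair clustering of $\sI'$, of cost $\opt_{\sI'}$. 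As it is also feasible on the tree instance, $\opt_{d_\chi}\le \mathrm{cost}_{d_\chi}(\sC^*,\phi^*)=\sum_{x\in X} d_\chi(\loc(x),\phi^*(x))$, so by linearity of expectation and the distortion bound, $\E_{\chi\sim\sD_\chi}[\opt_{d_\chi}]\le \sum_{x\in X}\E_\chi[d_\chi(\loc(x),\phi^*(x))]\le O(\log k)\sum_{x\in X}\bar d(\loc(x),\phi^*(x))=O(\log k)\cdot\opt_{\sI'}$. By Markov's inequality there is a constant $c$ such that, with probability at least $1/2$, the sampled tree satisfies $\opt_{d_\chi}\le c\log k\cdot\opt_{\sI'}$, and on this event the output has $\bar d$-cost at most $\alpha c\log k\cdot\opt_{\sI'}$.

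To boost this to a high-probability guarantee, I would repeat the sampling independently $T=\Theta(\log n)$ times (with the constant chosen according to the desired exponent), run $\sA$ on each sampled tree, and output the clustering of smallest cost \emph{with respect to the original metric $\bar d$}. With probability at least $1-2^{-T}\ge 1-n^{-c}$ at least one trial lands on a good tree, and since we return the best candidate measured in $\bar d$, the output is an $O(\alpha\log k)$-approximation for $\sI'$. If $\sA$ is itself randomized and only succeeds with high probability, a union bound over the $T$ trials (or a constant number of internal repetitions) absorbs the extra failure probability. The whole procedure makes a polynomial number of embedding samples and calls to $\sA$.

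The point that requires care — rather than a genuine obstacle — is that $\sA$ controls cost only in the tree metric $d_\chi$, not in $\bar d$. This is why a single "best" tree does not suffice: a random tree is good only with constant probability (through Markov applied to $\opt_{d_\chi}$), so amplification by repeated sampling together with re-evaluation of the candidates in the true metric $\bar d$ is essential. One should also record the minor bookkeeping observation that the tree produced by Theorem~\ref{thm:prob-metric-embedd} lives on the same ground set $\bar\sC$ over which centers are chosen, so the clustering $(\sC_\chi,\phi_\chi)$ transfers verbatim from the tree instance to $\sI'$.
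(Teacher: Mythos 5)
Your proposal is correct and follows essentially the same route as the paper's proof: sample a dominating tree from the FRT distribution, run $\sA$ on the tree instance, use domination plus the $O(\log k)$ expected-distortion bound and Markov's inequality to get a constant success probability, and amplify with $\Theta(\log n)$ independent samples, returning the best candidate evaluated in the original metric. The only cosmetic difference is that you apply Markov to the tree-optimum $\opt_{d_\chi}$ while the paper applies it directly to the expected $\bar d$-cost of the returned assignment; the chain of inequalities is otherwise identical.
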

\begin{proof}
Let $r^*$ be the optimal fair assignment for ${\sI}'$ and $\opt$ be its cost.
Consider an approximation of $\bar d$ by a distribution $\sD_{\sX}$ of dominating trees $\chi$ with distortion $\log k$, which exists by Theorem~\ref{thm:prob-metric-embedd}.

For a tree metric $d_\chi$, let $r_\chi: \bar{\sC} \times \bar{\sC} \rightarrow {\sR}_{\ge 0}$ denote the $O(\alpha)$-approximate assignment of points to centers that algorithm  $\sA$ finds on instance with metric $d_{\chi}$.
Then,
\begingroup
\allowdisplaybreaks
\begin{align*}
    \E_{\chi \sim \sD_{\sX}}[\sum_{q, c\in \bar{\sC}} d(q,c)\cdot \|r_{\chi}(q,c)\|_1] &\leq \E_{\chi \sim \sD_{\sX}}[\sum_{w, c\in \bar{\sC}} d_{\chi}(q,c) \cdot \|r_{\chi}(q,c)\|_1] \\
    &\leq \E_{\chi \sim \sD_{\sX}}[\alpha \cdot \sum_{q, c\in \bar{\sC}} d_{\chi}(q,c) \cdot \|r^*(q,c)\|_1] \\
    &= \alpha \cdot \sum_{q, c\in \bar{\sC}} \E_{\chi \sim \sD_{\sX}}[d_{\chi}(q,c)] \cdot \|r^*(q,c)\|_1 \\
    &\leq \alpha \cdot \sum_{q, c\in \bar{\sC}} O(\log k) \cdot d(q,c) \cdot \|r^*(q,c)\|_1 = O(\alpha \log k)  \cdot \opt.
\end{align*}
\endgroup
The three inequalities above hold, since (i) $d(q,c) \leq d_\chi(q,c)$ (always); (ii) $r_\chi$ is an $\alpha$-approximate assignment/solution;
and (iii) $\mathbb{E}{d_\chi}(q,c) \leq O(\log k) \cdot d(q,c)$ (recall that $\bar{d}$ is a restriction of $d$ to $\bar{C}$; for every $u,v\in \bar{C}, \bar{d}(u,v) = d(u,v)$). 

By Markov's inequality the obtained solution $r_\chi$ approximates the optimal assignment within a factor of $O(\alpha \log k)$ with probability at least $1/2$. By running the proposed algorithm $\Theta(\log n)$ times, we get an $O(\alpha \log k)$-approximate solution w.h.p.
\end{proof}
\subsection{Reduced Assignment Problem on Trees}\label{sec:DP-assignment}

In this section, we assume that $(L, d)$ is a tree metric on $k$ points with profile vector $v(u) \in {\sR}_{\ge 0}$ for every location $u\in L$. We open a center at every location and our goal now is to find a fair assignment of data points to centers. Recall that our notion of fairness is more general than that in Definition~\ref{def:fairness} (we discussed it in Remark~\ref{remark:arbitrary-profile}).

\subsubsection{Conversion to Binary Tree}
We choose an arbitrary root in the tree. It is convenient now to convert the tree to a {\em binary} tree in which every non-leaf vertex has exactly two children. We do that by adding {\em Steiner locations}. Namely, we replace each vertex $u$ with $k > 2$ children with a path of length $k-2$. The first vertex on the path is $u$ (we assume $u_0\equiv u$); other vertices $u_1, \dots, u_{k-2}$ are new Steiner locations. We keep the parent of $u$ unchanged; we let the parent of $u_i$ be $u_{i-1}$ (for $i\geq 1$). For $j \in\{2,\dots, k-1\}$, we reassign $j$-th child $v_j$ of $u$ to $u_{j-1}$ and reassign $k$-th child $v_k$ to $u_{k-2}$. We set the length of all edges on the path $u\to u_1\to \dots\to u_{k-2}$ to 0; we let the length of each edge
$(u_{j-1}, v_j)$ be that of $(u,v_j)$ in the original tree; the length of $(u_{k-2}, v_k)$ to that of $(u, v_k)$ in the original tree. Additionally, we add a Steiner child to vertices that have exactly one child. For a concrete example, consider a tree rooted at vertex $1$, in which $1$ has four children $2,3,4,5$. We transform the tree by adding Steiner nodes 6 and 7 as follows:

\begin{center}
\begin{tikzpicture}[scale=0.3,baseline=-5mm,level 1/.style = {level distance = 25mm}]
\node {1}
    child {node {2}}
    child {node {3}}
    child {node {4}}
    child {node {5}};
\end{tikzpicture}
\quad $\Huge\displaystyle\Longrightarrow$\quad
\begin{tikzpicture}[scale=0.3,baseline=-10mm,level 1/.style = {level distance = 15mm, sibling distance=40mm}]
\node {1}
    child {node {2}}
    child {node {6}
    child {node{3}}
    child {node {7}
    child {node {4}}
    child {node {5}}
    }};
\end{tikzpicture}
\end{center}

Let $A$ be the set of non-Steiner locations and $B$ be the set of Steiner locations. We open a center at every $a\in A$; we do not open centers at any Steiner location $b\in B$.
For $b\in B$, let  $v(b) = 0$.

\subsubsection{Dynamic Program}
Now we write a dynamic program (DP). We first recall the setup. We are given a binary tree $L$, which contains Steiner and non-Steiner nodes/locations. Each non-Steiner node contains some data points that we want to cluster and each Steiner node contains no data points. Our goal is to move the data points around (assign data points to the non-Steiner nodes) such that the resulting clustering is fair. The objective is to minimize the ``assignment cost'' $\sum_{t,c \in A} d(t,c) \cdot \|r(t,c)\|_1$, where $\|r(t,c)\|_1$ is the number of data points at location $t$ that are assigned to $c$.

Let $T_u$ be the subtree of $L$ rooted at $u$. Let $A_u$ and $B_u$ be non-Steiner and Steiner locations in $T_u$, respectively. For a fixed assignment $\phi$, data points located at nodes/locations in $T_u$ can be classified into two types: \textit{local points} and \textit{out-points}. A \textit{local point} is assigned to a node/location in $T_u$ by $\phi$ and an \textit{out-point} is assigned to a node/location outside $T_u$ by $\phi$. In addition, there are some points outside $T_u$ that are assigned to nodes/locations in $T_u$. We refer to these points as \textit{in-points}. Now we define two functions $\rho_{out}:A_u \to {\sR}_{\geq 0}$ and $\rho_{in}:A_u\to {\sR}_{\geq 0}$, which specify the out-points and in-points, respectively. Namely, for each location $c \in A_u$, $\rho_{out}(c)$ is the profile of out-points located at $c$, and $\rho_{in}(c)$ is the profile of in-points assigned to $c$. Let $q := \sum_{c\in A_u} \rho_{in}(c) - \sum_{y\in A_u} \rho_{out}(y)$ be the ``net-imports'' of $T_u$. Clearly, $q \in {\sR}$.

%We create a DP-table $M[u,q]$ where $u\in L$ and $q\in {\cal R}$. We will consider partial solutions for points in $T_u$. In a partial solution, some data points in $T_u$ are assigned to a center in $T_u$. Some points in $T_u$ are unassigned; later, we will assign them to centers outside of $T_u$. We refer to these points as unassigned points or out-points. Additionally, for each center $c$ in $T_u$, the partial solution stores how many data points located outside of $T_u$ will be assigned to center $c$. We refer to these points as in-points.

Now, we create a DP-table $M[u,q]$ where $u\in L$ and $q\in {\sR}$. Loosely speaking, $M[u,q]$ is the cost of the minimum cost partial solution such that (a) clusters for all centers $c\in A_u$ satisfy fairness constraints (note that each cluster consists of the points at locations in $T_u$ assigned to $c$ and in-points assigned to $c$) and (b) the difference between the number of in-points and out-points from group $j$ equals $q_j$. The cost of a partial solution comprises (i) the assignment costs for points in $T_u$ assigned to centers in $T_u$, (ii) for each out-point $x$ located at $y\in A_u$, 
the portion $d(y, u)$ of the assignment cost of $x$ that ``lies'' inside $T_u$, and (iii) for each in-point $x'$ assigned to center $c\in A_u$, the portion $d(u,c)$ of the assignment cost of $x'$ that ``lies'' inside $T_u$. 

Formally, $M[u,q]$ is the cost of the optimal solution for the following problem.

\paragraph{Find}
\begin{itemize}%[leftmargin=*]
    \item function $r:A_u \times A_u\rightarrow {\sR}_{\geq 0}$; map $r$ specifies the assignment of data points located in subtree $T_u$ to centers in $T_u$. Namely, $r(y,c)$ is the profile of the set of data points located at $y$ that are  assigned to center $c$. Let $R_u(c) = \sum_{y\in T_u} r(y,c)$. Note that $R_u(c)$ is the profile of the set of data points at locations in $A_u$ that are assigned to center $c$. 
    
    \item function $\rho_{out}:A_u \to {\sR}_{\geq 0}$; map $\rho_{out}$ specifies the set of data points in subtree $T_u$ that are not assigned to any center in $T_u$; these data points will have to be assigned to centers outside of $T_u$ later. Namely, $\rho_{out}(y)$ is the profile of currently unassigned data points at location $y$.

    \item function $\rho_{in}:A_u\to {\sR}_{\geq 0}$; $\rho_{in}$ specifies how many data points located outside of the subtree $T_u$ are assigned to centers in $T_u$. Namely, $\rho_{in}(c)$ is the profile of data points that lie outside of $T_u$ but assigned to center $c$.
\end{itemize}

\paragraph{Such that}
\begin{itemize}%[leftmargin=*]
    \item for every $y\in A_u$, $\sum_{c\in A_u} r(y, c) + \rho_{out}(y) = v(y)$. This condition says that every data point at location $y$ is assigned to some center $c \in A_u$ or is currently unassigned.
    
    \item $R_u(c) + \rho_{in}(c) \in F$ for all $c\in A_u$. This condition says that the profile of the cluster centered at $c$ is in $F$ (that is, the cluster satisfies the fairness constraints). Here, $R_u(c)$ counts data points in $T_u$ that are assigned to $c$ and $\rho_{in}(c)$ counts data points outside of $T_u$ that are assigned to $c$.
    
    \item $\sum_{c\in A_u} \rho_{in}(c) - \sum_{y\in A_u} \rho_{out}(y) = q$.
\end{itemize}

\paragraph{Cost} The objective is to minimize the following cost
\begin{gather*}
    \sum_{y,c\in A_u} d(y,c) \cdot \|r(y,c)\|_1 
    + \sum_{c\in A_u} d(c,u) \cdot \|\rho_{in}(c)\|_1
    + \sum_{y\in A_u} d(y,u) \cdot \|\rho_{out}(y)\|_1.
\end{gather*}
If there is no feasible solution the cost is $+\infty$.
% \begin{itemize}
%     \item The objective is to minimize the following cost

%     \begin{align*}
%         &\sum_{y,z\in A_x} d(y,z) \cdot \|r(y,z)\|_1 \\
%         + &\sum_{y\in A_x} d(x,y) \cdot \|\rho_{in}(y)\| \\
%         + &\sum_{z\in A_x} d(x,z) \cdot \|\rho_{out}(z)\|.
%     \end{align*}

%     \item If there is no feasible solution the cost is $+\infty$.
% \end{itemize}

\subsubsection{Solving the DP} 
We fill out the DP table starting from the leaves and going up to the root (bottom-up). 

Computing $M[u,q]$ for leaves $u$ is straightforward using the definition of $M[u,q]$. In this case, the subtree $T_u$ rooted at $u$ is a single node. Thus, for any given $q$, the profile at $u$ after assignment is
\begin{align*}
    R_u(u) + \rho_{in}(u) 
    = r(u,u) + \rho_{in}(u)
    = r(u,u) + \rho_{out}(u) + q
    = v(u) + q.
\end{align*}
If $u$ is not a Steiner node, then $M[u,q] = 0$ if $v(u) + q \in F$ and $M[u,q] = \infty$ if $v(u) + q \not \in F$. If $u$ is a Steiner node, then $M[u,q] = 0$ if $q  = 0$ and $M[u,q] = \infty$ if $q \not = 0$.

Now assume that $u$ is not a leaf; since the tree is a binary tree, $u$ has two children $y$ and $z$. Note that node $u$ will send $q_y$ points to $T_y$ and $q_z$ points to $T_z$ for some $q_{y}, q_{z}\in {\sR}$. Then $M[u,q]$ is the minimum over $q_{y}, q_{z}\in {\sR}$ satisfying
\begin{itemize}
    \item if $u\in A$, then $v(u) + q - q_y - q_z\in F$; that is, cluster centered at $u$ satisfies fairness constraints.
    \item if $u\in B$, then $q - q_y - q_z = 0$, since $v(u) = 0$ and no point will be assigned to $u$.
\end{itemize}
of the following cost
\begin{align}\label{eq:dp-update-rule}
    M[y, q_y] + M[z, q_z] + d(u, y) \cdot \|q_y\|_1 + d(u, z) \cdot \|q_z\|_1
\end{align}
The pseudo-code of the DP algorithm is provided in Algorithm~\ref{alg:DP}.

\begin{lemma}\label{lem:DP}
The described dynamic programming algorithm runs in time $n^{O(\ell)}$ and finds an optimal fair assignment of the data points $X$ located in the set of locations $L$. 
\end{lemma}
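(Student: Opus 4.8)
The plan is to prove the two assertions of the lemma separately: the running-time bound by a direct counting argument, and correctness by bottom-up induction on the binary tree, the crux being the correctness of the update rule~\eqref{eq:dp-update-rule}.

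For the running time, I would first note that after the conversion to a binary tree the number of locations is still $|L| = O(k) = O(n)$: replacing a vertex with $m>2$ children by a path adds $m-2$ Steiner nodes, we add one Steiner child to each vertex with a single child, and the total number of children over all vertices of the original tree is $|L|-1$, so $|L|$ at most doubles. Since every coordinate of a net-imports vector lies in $\{-n,\dots,n\}$, the DP table has $|L|\cdot|\sR| = O(n)\cdot(2n+1)^{\ell} = n^{O(\ell)}$ cells. A leaf cell is filled in time $O(t_F)=O(\ell)$ (one membership query to $F$). For an internal node $u$ with children $y,z$, evaluating~\eqref{eq:dp-update-rule} ranges over all $(q_y,q_z)\in\sR\times\sR$, i.e.\ $(2n+1)^{2\ell}$ pairs, each requiring one membership query and two table look-ups; this is $n^{O(\ell)}$ per cell, so the total is $n^{O(\ell)}$ (with an extra factor $t_F$ in the general fair-profile setting of Remark~\ref{remark:arbitrary-profile}). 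Back-tracking to recover the assignment itself is no more expensive.

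For correctness I would prove, by induction from the leaves to the root, that $M[u,q]$ equals the optimum of the partial problem it is defined by; the base case (a single-node subtree) is exactly the calculation carried out in the text. For the inductive step I would establish the two inequalities implicit in~\eqref{eq:dp-update-rule}. For ``$\le$'': given valid $(q_y,q_z)$ with $v(u)+q-q_y-q_z\in F$, take optimal partial solutions for $M[y,q_y]$ and $M[z,q_z]$ and \emph{first normalize} them so that no group has data points simultaneously entering and leaving $T_y$ (resp.\ $T_z$); if it does, re-match one entering and one leaving point of that group to each other \emph{inside} $T_y$, which preserves all net-imports and all cluster profiles (hence fairness) and, by $d(y_a,y_b)\le d(y_a,y)+d(y,y_b)$, does not increase the cost. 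Then splice the two normalized solutions together, routing the net exports/imports of $T_y$, of $T_z$, and the $v(u)$ points at $u$ through $u$ so as to form the cluster at $u$ (profile $v(u)+q-q_y-q_z\in F$) and leave a residual net-imports of exactly $q$ for $T_u$. One checks feasibility, and since the solutions are now unidirectional, exactly $\|q_y\|_1$ points cross edge $(u,y)$ and $\|q_z\|_1$ cross $(u,z)$, so the cost is precisely $M[y,q_y]+M[z,q_z]+d(u,y)\|q_y\|_1+d(u,z)\|q_z\|_1$. For ``$\ge$'': restrict an optimal partial solution for $M[u,q]$ to $T_y$ and $T_z$; these are valid partial solutions with some net-imports $q_y,q_z$ (cluster profiles $R_u(c)+\rho_{in}(c)$ are unchanged by restriction), and a mass-balance computation at $u$ shows $v(u)+q-q_y-q_z = R_u(u)+\rho_{in}(u)\in F$, so $(q_y,q_z)$ is legal in~\eqref{eq:dp-update-rule}. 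Partitioning the data points by where their location and assigned center sit relative to $T_y,T_z$ and using that $\bar d$ is a tree metric (distances add along the unique path), the cost decomposes as $\mathrm{cost}_{T_y}+\mathrm{cost}_{T_z}+d(u,y)N_y+d(u,z)N_z$, where $N_y,N_z$ count the data points whose assignment path uses edge $(u,y)$, resp.\ $(u,z)$; since gross flow dominates net flow, $N_y\ge\|q_y\|_1$, $N_z\ge\|q_z\|_1$, and with $\mathrm{cost}_{T_y}\ge M[y,q_y]$, $\mathrm{cost}_{T_z}\ge M[z,q_z]$ this gives ``$\ge$''. Finally, the root's subtree is all of $L$, so it has no in- or out-points, whence $M[\mathrm{root},\mathbf 0]$ is the optimal fair-assignment cost; since the Steiner edges have length $0$ and all other edges keep their lengths, $\bar d$ restricted to the centers $A$ is unchanged and this solves the original instance.

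The step I expect to be the main obstacle is the correctness of the update rule, specifically two intertwined points: (i) the normalization that allows the \emph{net} quantity $\|q_y\|_1$ to stand in for the \emph{gross} number of points crossing an edge — one must verify the short-circuit exchange is always available, keeps every cluster's profile in $F$, and never increases the cost; and (ii) the exact cost accounting across the splice/restriction, i.e.\ that the three cost terms of the partial problem at $u$ decompose cleanly into the children's cost terms plus $d(u,y)N_y+d(u,z)N_z$, which needs a careful case analysis over the relative positions of a point's location and its center and relies on path-additivity of the tree metric. The running-time bound and the base case are routine by comparison.
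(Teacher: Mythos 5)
Your proposal is correct and takes essentially the same route as the paper: the identical table-size and per-cell enumeration count giving $n^{O(\ell)}$, plus a bottom-up induction verifying the update rule \eqref{eq:dp-update-rule}, which the paper simply declares straightforward. Your normalization/exchange argument (short-circuiting simultaneous in- and out-flows of the same group so that the gross traffic across an edge equals the net $\|q_y\|_1$) and the cost-decomposition case analysis are exactly the details the paper leaves implicit, so this is a fleshed-out version of the paper's proof rather than a different approach.
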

\begin{proof}
It is straightforward that the DP algorithm correctly computes the DP entries. The cost of the optimal fair assignment for our problem is given by $M[root, \boldsymbol{0}]$ where $root$ is root of the tree and $\boldsymbol{0}$ is the all-zero vector. 

The update rule, which is specified by Eq.~\eqref{eq:dp-update-rule} and the constraints on $q_y, q_z, q$ and $v(u)$, computes $M[u,q]$ correctly in time $n^{O(\ell)}$, which accounts for enumerating over all possible values of $q_y$ and $ q_z$. Since the DP tables contains $k \cdot n^{O(\ell)}$ cells, the total time to fully compute the table is $k \cdot n^{O(\ell)} = n^{O(\ell)}$.  
Finally, once the whole table is computed, we can recover the assignment itself by traversing the table from $M[root, \boldsymbol{0}]$ in the reverse direction, which takes $n^{O(\ell)}$ time.
The total running time is $n^{O(\ell)}$.
\end{proof}
\begin{proof}[\sc Proof of Theorem~\ref{thm:main}:]
We first apply the consolidation step and thus reduce the problem to the set of locations $\bar{\sC}$ is of size $k$. Then we approximate metric on $\bar{\sC}$ by a distribution of dominating tree metrics, as explained in \ref{clm:assignment-tree-metric}.
We obtain a logarithmic number of instances with tree metrics.
We exactly solve each of them using the DP algorithm described above.
Finally, we return the best of the clusterings we find.
It follows from Claims~\ref{clm:reduction} and \ref{clm:assignment-tree-metric} and Lemma~\ref{lem:DP}, that the algorithm finds an $O(\log k)$ approximation with high probability in time $n^{O(\ell)}$.
\end{proof}

\begin{algorithm}%[H]
\caption{DP Algorithm on Trees} \label{alg:DP}
\begin{algorithmic}[1]
\REQUIRE $T$ is a binary tree of height $d$, $r$ is the root of $T$, and for each $i = 0,1,\dots,d$, $N_i$ is the set of nodes in the $i$-th level of $T$.
\FOR{$u \in N_d$}
    \FOR{$q \in {\sR}$}
        \IF{$u$ is a non-Steiner node}
            \IF{$v(u) + q \in F$}
                \STATE $M[u,q] = 0$
            \ELSE
             \STATE  $M[u,q] = \infty$
            \ENDIF
        \ELSE
            \IF{$q = 0$}
                \STATE $M[u,q] = 0$
            \ELSE
                \STATE $M[u,q] = \infty$
            \ENDIF
        \ENDIF
    \ENDFOR
\ENDFOR
\FOR{$i = d-1,d-2,\dots,0$}
    \FOR{$u \in N_i$}
        \STATE $y,z \gets \texttt{children of} \hspace{2mm} u$
        \IF{$u$ is a Steiner node}
            \STATE $M[u,q] = \min \{M[y, q_y] + M[z, q_z] + d(u, y) \cdot \|q_y\|_1 + d(u, z) \cdot \|q_z\|_1 : q_y,q_z \in {\sR}, q = q_y+q_z \}$
        \ELSE
            \STATE $M[u,q] = \min \{M[y, q_y] + M[z, q_z] + d(u, y) \cdot \|q_y\|_1 + d(u, z) \cdot \|q_z\|_1 : q_y,q_z \in {\sR}, v(u) + q - q_y - q_z \in F \}$
        \ENDIF
    \ENDFOR
\ENDFOR
\RETURN{} $M[r,0]$
\end{algorithmic}
\end{algorithm}
\section{Special Case of Exact Fairness}\label{sec:exact}
In this section, we design an $O(\log k)$ approximation algorithm for $k$-median with exact fairness constraints. The algorithm runs in time $(kf)^{O(\ell)} \log n + \poly(n)$ and succeeds with high probability.
%any desired constant probability.
% Note that, in this case, we still first apply the {\em location consolidation} step as described in Section~\ref{sec:location_consolidation}, and then we only need to find an efficient algorithm for the assignment problem. Formally, following Claim~\ref{clm:reduction}, 
% \begin{corollary}\label{cor:reduction-exact}
% Denote the original instance by $\cal I$ and the instance constructed via location consolidation step by ${\cal I}'$. Let $\opt_{\cal I}$ and $\opt_{{\cal I}'}$ be their optimal values. Assume that we used an $\alpha$-approximation algorithm for $k$-median in the location consolidation step. Further assume that there is a $\gamma_k$-approximation algorithm for the exactly fair $k$-median on ${\cal I}'$. Then, there is an $(\alpha\gamma_k + \alpha + \gamma_k)$-approximation algorithm for fair $k$-median on $\cal I$.
% \end{corollary}

To recall, we say that a cluster centered at $c$ is {\em exactly fair} if $\forall j \in [\ell]$, $R_j(c) = \frac{|X_j|}{|X|} \| R(c) \|_1$, 
where $\frac{|X_j|}{|X|}$ is the proportion of data points in $X$ that belongs to group $j$. We say that a clustering is \textit{exactly fair} if all of its clusters are exactly fair. More generally, a set of data points $S$ with profile $v(S) \in \mathcal{R}_{\geq 0}$ is \textit{exactly fair} if for all $j \in [\ell]$,
$v_j(S) = \frac{|X_j|}{|X|} \| v(S) \|_1$.

Recall that we defined \textit{fairlet} in Definition~\ref{def:exact} as a minimal size non-empty set that is exactly fair. Note that all fairlets have the same size and we use $f$ to denote their size. Further, for every $j\in [\ell]$, we use $f_j$ to denote the number of points from group $j$ in any fairlet. 

% \begin{theorem}\label{thm:exact-fair}
% There exists an algorithm that computes an $O(\log k)$-approximation for exactly fair $k$-median and runs in time $(kf)^{O(\ell)} + \poly(n)$, where $f$ is the fairlet size.
% \end{theorem}

\subsection{Reassignment Method} %For the exact fairness, instead of directly finding a minimum cost fair assignment via our dynamic programming approach 
We start with the reduced instance $\sI'$ on the set of locations $\bar \sC$, which we constructed in the location consolidation step.

Given any subset $\mathsf{S} \subseteq X$ with profile vector $v(\mathsf{S})$, we say that $v(\mathsf{S})$ is $\gamma$-approximately fair if
\begin{align}\label{eq:gamma-approx-fair}
 \Bigl| v_j(\mathsf{S}) - \frac{f_j}{f}\| v(\mathsf{S})\|_1 \Bigr| \leq \gamma, \;\;\forall j\in [\ell].
\end{align}
Given an assignment $r$, we say that $r$ is a $\gamma$-approximately fair assignment if $R(c)$ is $\gamma$-approximately fair for all $c \in \sC$, where $R(c) = \sum_{x\in X} r(x,c) \in {\sR}_{\geq 0}$ is the center profiles corresponding to $r$.

%We say that $\bar r$ is a $\gamma$-approximately fair assignment if 

%\begin{align}\label{eq:gamma-approx-fair}
% \Bigl| \bar R_j(c) - \frac{f_j}{f}\|\bar R(c)\|_1 \Bigr| \leq \gamma, \;\;\forall c\in \bar X, j\in [\ell],
%\end{align}
%where $\bar R(c) = \sum_{x\in X} \bar r(x,c) \in {\cal R}_{\geq 0}$ is the center profiles corresponding to $\bar r$. 
We use the following result by Bera et al.\ \cite{bera2019fair}.

\begin{theorem}\label{thm:additive-assignment}%[Theorem 7,~\cite{bera2019fair}]
For any metric space $X$ and a set of centers $\sC \subseteq X$, There exists a $3$-approximately fair assignment $r: X \times \sC \rightarrow {\sR}_{\ge 0}$ whose cost is no more than the optimal fair assignment of $X$. Moreover, there is an algorithm which finds this assignment in polynomial time.
\end{theorem}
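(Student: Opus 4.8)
The plan is to follow the standard LP-rounding recipe for assignment problems with side constraints, specialized to exact fairness (this is the specialization of the result of Bera et al.\ to $\alpha_j=\beta_j=f_j/f$). First I would write the natural assignment LP with the centers in $\sC$ fixed and all of them open: a variable $x_{uc}\in[0,1]$ for every data point $u\in X$ and center $c\in\sC$, the assignment constraints $\sum_{c\in\sC}x_{uc}=1$ for all $u$, the exact-fairness constraints $\sum_{u\in X_j}x_{uc}=\tfrac{f_j}{f}\sum_{u\in X}x_{uc}$ for every $c\in\sC$ and $j\in[\ell]$, and the objective $\min\sum_{u,c}d(u,c)\,x_{uc}$. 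Any integral exactly-fair assignment is a feasible point of this LP, so its optimum value $\opt_{LP}$ is at most the cost of the optimal (integral) exactly-fair assignment of $X$. The LP has $\poly(n)$ size and can be solved in polynomial time to obtain a fractional optimum $x^*$.

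Second, I would round $x^*$ to an integral assignment via a min-cost flow network that ``protects'' the fractional color masses. For each center $c$ let $L_c^*=\sum_u x^*_{uc}$ be its fractional size and for each group $j$ let $m_{cj}^*=\sum_{u\in X_j}x^*_{uc}$ be the fractional number of group-$j$ points it receives; note $m_{cj}^*=\tfrac{f_j}{f}L_c^*$ by feasibility. Build a network with a source $s$, a node for each data point $u$, a ``bucket'' node $(c,j)$ for each center--group pair, a node for each center $c$, and a sink $t$, with arcs: $s\to u$ of capacity and lower bound $1$; $u\to(c,j)$ of capacity $1$ and cost $d(u,c)$ where $j$ is the group of $u$; $(c,j)\to c$ of cost $0$ with lower bound $\floor{m_{cj}^*}$ and upper bound $\ceil{m_{cj}^*}$; and $c\to t$ of cost $0$ with lower bound $\floor{L_c^*}$ and upper bound $\ceil{L_c^*}$. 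The fractional solution $x^*$ is a feasible fractional $s$--$t$ flow of value $n$ in this network, of cost $\opt_{LP}$; since the node--arc incidence matrix of a flow network is totally unimodular and all bounds are integral, every vertex of the flow polytope is integral, hence there is an integral flow of cost at most $\opt_{LP}$. Reading it off, for each point $u$ the unique center $c$ carrying a unit of flow on $u\to(c,j)$ defines the integral assignment $r$ (flow conservation at $u$ forces exactly one such arc).

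Third, I would verify the two guarantees. The cost of $r$ equals the cost of the integral flow, which is $\le\opt_{LP}\le$ cost of the optimal exactly-fair assignment. For fairness, the integral flow satisfies $\floor{m_{cj}^*}\le R_j(c)\le\ceil{m_{cj}^*}$ and $\floor{L_c^*}\le\|R(c)\|_1\le\ceil{L_c^*}$, so $|R_j(c)-m_{cj}^*|<1$ and $|\,\|R(c)\|_1-L_c^*|<1$; combining with $m_{cj}^*=\tfrac{f_j}{f}L_c^*$ and $\tfrac{f_j}{f}\le 1$ gives
\[
  \Bigl| R_j(c)-\tfrac{f_j}{f}\|R(c)\|_1 \Bigr| \le |R_j(c)-m_{cj}^*| + \tfrac{f_j}{f}\,\bigl|\,\|R(c)\|_1-L_c^*\bigr| < 2
\]
for every center $c$ and group $j$, which is comfortably within the claimed bound of $3$; hence $r$ is $3$-approximately fair. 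All steps --- solving the LP, building the network, computing a min-cost integral flow (lower bounds removed by the standard reduction), and extracting $r$ --- run in polynomial time.

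The main obstacle is getting the rounding step exactly right: one must choose the bucket/center bounds so that (i) $x^*$ is genuinely a feasible fractional flow in the network (this is what forces the intervals to straddle $m_{cj}^*$ and $L_c^*$) and (ii) the total-unimodularity argument applies, yielding an integral flow of no greater cost. Once the network is correct, the fairness bound is a one-line triangle inequality and the cost bound is immediate. (If a point could belong to several groups it would feed several buckets simultaneously and the flow structure would break, but that regime is outside the scope of this statement, where the groups $X_1,\dots,X_\ell$ are disjoint.)
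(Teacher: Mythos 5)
Your proposal is correct, but it is worth noting that the paper does not prove this statement at all: it is imported verbatim as a black box from \citet{bera2019fair} (the analogous rounding also appears in \citet{bercea2019cost}), so there is no internal proof to compare against. Your argument is a legitimate self-contained derivation for the exact-fairness special case, and it follows the standard route from that literature: write the fixed-centers assignment LP, observe that the optimal integral exactly fair assignment is feasible for it, and then round the fractional optimum through a min-cost flow network with a bucket node per (center, group) pair whose arc carries lower bound $\floor{m^*_{cj}}$ and upper bound $\ceil{m^*_{cj}}$, plus center arcs bounded by $\floor{L^*_c}$ and $\ceil{L^*_c}$; total unimodularity with integral bounds gives an integral flow of cost at most the LP optimum, and the bucket/center bounds give $|R_j(c)-m^*_{cj}|\le 1$ and $\bigl|\|R(c)\|_1-L^*_c\bigr|\le 1$, hence additive violation below $2$. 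The one place where care is genuinely needed is exactly the one you flag: without the center-level arcs $c\to t$ the per-group deviations could accumulate to $\Theta(\ell)$ in $\|R(c)\|_1$, so including both layers of bounds is essential, and you did. Your bound of $2$ is in fact stronger than the violation $3$ claimed in the cited theorem (Bera et al.\ prove $4\Delta+3$ for overlapping groups, with the disjoint case giving $3$; the flow-based rounding you use is closer to Bercea et al.'s, which attains violation $1$), and a $\gamma$-approximately fair assignment for $\gamma\le 3$ is all that the subsequent lemmas (e.g., Lemma~\ref{lem:problematic}) require, so your version slots into the rest of the paper without change.
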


We start by running the assignment algorithm from Theorem~\ref{thm:additive-assignment} on instance $\sI'$. 
%Let $\cal I''$ be the resulting instance.
%Let $\cal I''$ be the instance obtained by running the assignment algorithm from Theorem~\ref{thm:additive-assignment} on input $\cal I'$. That is, the locations in $\cal I''$ are still $\bar X$, but the profiles are changed according to the assignment algorithm.
Let $r$ denote this new assignment and $R(c) = \sum_{x\in X} r(x,c) \in {\sR}_{\geq 0}$ be the center profiles corresponding to $r$.
As discussed in Section~\ref{sec:cluster-instance}, every assignment defines a new instance. Let ${\sI}''$ be the instance defined by assignment $r$.
Now our goal is to modify $r$ by reassigning only a small number of points so that the obtained assignment is fair. We describe this reassigning procedure below.

% Then, we show that there exists a modification of $\bar r$ that only modifies the assignment of {\em a small number of points} $S^*$ and can be computed in time $|S^*|^{O(\ell)}$. 
%Next, we show that we can find a yet small-size superset of $S^*$ denoted as $S$ in polynomial time. Then, it only remains to solve the exact fair assignment only restricted to set $S$. Now, our dynamic programming algorithm from Section~\ref{sec:DP-assignment} runs in time $|S|^{O(\ell)}$.       

%In the rest of this section, we consider the task of converting the $\gamma$-approximately fair assignment to an exactly fair assignment.

\begin{lemma}\label{lem:reassignment}
Assume that there is a $\beta_k$-approximation algorithm for the fair clustering problem on $\sI''$. Then, there is an $(2\beta_k + 1)$-approximation algorithm for $\sI'$.
% \begin{itemize}
%     \item $OPT_{\cal{A}''} \leq (\alpha'+1) OPT_{\mathcal{A}'}$;
%     \item Given a solution of cost $z'$ for $\mathcal{A}''$, we can construct a solution of cost at most $z + \alpha' OPT_{\mathcal{A}'}$ for $\mathcal{A}'$.
% \end{itemize}
% Consequently, there is an $(\beta_k' \alpha' + \beta_k' + \alpha')$-approximation algorithm for $\mathcal{A}'$.
\end{lemma}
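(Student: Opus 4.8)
\textbf{Proof proposal for Lemma~\ref{lem:reassignment}.}

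The plan is to mirror the structure of Claim~\ref{clm:reduction}, but using the assignment $r$ from Theorem~\ref{thm:additive-assignment} (rather than a $k$-median solution) as the ``reference'' clustering that defines the new instance $\sI''$. Let $\mathbf{cost}$ denote the cost of the assignment $r$ on $\sI'$; by Theorem~\ref{thm:additive-assignment}, $\mathbf{cost} \le \opt_{\sI'}$, where $\opt_{\sI'}$ is the cost of an optimal fair clustering of $\sI'$. Recall that $\sI''$ is the instance obtained from $\sI'$ by moving each data point $x$ to its assigned center $r$-center; thus $\loc_{\sI''}(x) = \phi_r(x)$, and Claim~\ref{claim:cost:reassignment} applies with the roles $\sJ = \sI'$, $\sJ' = \sI''$, and the distinguished clustering being $(\bar\sC, \phi_r)$.

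First I would bound $\opt_{\sI''} \le \opt_{\sI'} + \mathbf{cost} \le 2\opt_{\sI'}$: take the optimal fair clustering of $\sI'$, view it as a clustering of $\sI''$, and apply Claim~\ref{claim:cost:reassignment} to see its cost grows by at most $\mathbf{cost} \le \opt_{\sI'}$. Note that fairness is a property of the partition of data points into groups within each cluster, and the reassignment from $\sI'$ to $\sI''$ changes only locations, not group memberships; hence a fair clustering of $\sI'$ is still fair as a clustering of $\sI''$, and the set of feasible (fair) clusterings is identical for the two instances. Second, run the assumed $\beta_k$-approximation algorithm on $\sI''$ to get a fair clustering of cost at most $\beta_k \opt_{\sI''} \le 2\beta_k \opt_{\sI'}$ with respect to $\sI''$; this clustering is also fair for $\sI'$. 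Applying Claim~\ref{claim:cost:reassignment} once more, its cost with respect to $\sI'$ is at most $2\beta_k \opt_{\sI'} + \mathbf{cost} \le (2\beta_k + 1)\opt_{\sI'}$, which is exactly the claimed bound. Output this clustering as the solution for $\sI'$.

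The only subtlety — and the step I would state most carefully — is the observation that the feasible region (the set of fair clusterings) is preserved under the location-to-center reassignment, so that "optimal fair clustering of $\sI''$" and "optimal fair clustering of $\sI'$" refer to minimizing the same objective-like quantity over the same feasible set but with perturbed distances; this is what lets Claim~\ref{claim:cost:reassignment} be invoked in both directions. Everything else is a routine application of the triangle-inequality bound already packaged in Claim~\ref{claim:cost:reassignment}, exactly as in the proof of Claim~\ref{clm:reduction}, with $\alpha$ effectively replaced by $1$ because Theorem~\ref{thm:additive-assignment} guarantees $\mathbf{cost} \le \opt_{\sI'}$ rather than $\mathbf{cost} \le \alpha\,\opt_{\sI'}$. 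I do not anticipate a genuine obstacle here; the lemma is a clean analogue of Claim~\ref{clm:reduction}, and the $3$-approximate fairness of $r$ plays no role in this particular lemma (it will matter later, when one argues that only $O(kf^2)$ points need to be reassigned to repair fairness on $\sI''$).
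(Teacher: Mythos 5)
Your proposal is correct and follows essentially the same route as the paper's proof: both apply Claim~\ref{claim:cost:reassignment} twice, using the guarantee of Theorem~\ref{thm:additive-assignment} that the cost of $r$ is at most $\opt_{\sI'}$, to get $\opt_{\sI''} \le 2\opt_{\sI'}$ and then the final $(2\beta_k+1)\opt_{\sI'}$ bound. Your explicit remark that the set of fair clusterings is unchanged by the relocation (since only locations, not group memberships, change) is a useful clarification that the paper leaves implicit.
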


\begin{proof}
The proof is similar to that of Claim~\ref{clm:reduction}. We run the $\beta_k$ approximation algorithm on instance ${\sI}''$ and output the obtained clustering as a clustering for ${\sI}'$. We now upper bound its cost.

Let $\opt_{\sI'}$ and $\opt_{\sI''}$ be the costs of the optimal fair clusterings/assignments for instances $\sI'$ and $\sI''$, respectively. 
By Claim~\ref{claim:cost:reassignment} applied to $r$,
$\opt_{\sI''} \leq 2\opt_{\sI'}$. Thus the cost of the clustering we find w.r.t. ${\sI}''$ is at most $2\beta_k\opt_{\sI'}$. Now applying Claim~\ref{claim:cost:reassignment} to this clustering, we get that its cost w.r.t. ${\sI}'$ is at most $(2\beta_k + 1)\opt_{\sI'}$, as required.
\end{proof}

\paragraph{Algorithm for the reassignment problem.}
We present an $O(\log k)$-approximation algorithm for the reassignment problem that runs in time $(fk)^{O(\ell)}$. 

Suppose we are given an instance $\sJ$ with locations $c_1, \dots, c_k$, metric $d$ and profile vectors $v(c_1), \dots, v(c_k)$, that satisfy $3$-approximately fair requirement as in Eq.~\eqref{eq:gamma-approx-fair}. 
We will apply our reassignment algorithm to $\sJ = \sI''$.
For every $i \in [k]$, let $C_i$ be the set of points assigned to $c_i$ in $\sJ$. 
\begin{definition}
For every $i \in [k]$, let $F_i \subseteq C_i$ be a maximal exactly fair subset of $C_i$. Decompose each $F_i$ into $n_i = |F_i|/f$ fairlets ${\sF}^{(i)}_1, \dots, {\sF}^{(i)}_{n_i}$. Let $P_i = C_i \setminus F_i$ be the set of remaining points. We say that ${\sD} = \{({\sF}^{(i)}_1, \dots, {\sF}^{(i)}_{n_i}, P_i)\}_{i\in [\ell]}$ is a fairlet decomposition for clustering $C:= (C_1, \dots, C_k)$. We call points in $P_i$ problematic points. 
\end{definition}
\begin{lemma} \label{lem:problematic}
Consider a fairlet decomposition for $C$. Then for every $i \in [k]$, $|P_i| < 4f$.
\end{lemma}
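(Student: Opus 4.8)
The plan is to bound $|P_i|$ by combining three facts: the profile of $C_i$ is close to fair (it is $3$-approximately fair, so $|v_j(C_i) - \frac{f_j}{f}\|v(C_i)\|_1| \le 3$ for every $j$); the profile of $F_i$ is exactly fair; and $F_i$ is a \emph{maximal} exactly fair subset. The key point is that a maximal exactly fair subset leaves a residue $P_i = C_i \setminus F_i$ that cannot contain a fairlet, since otherwise we could enlarge $F_i$, contradicting maximality. So the strategy is: first show $v(P_i) = v(C_i) - v(F_i)$ is itself within additive $3$ of being exactly fair in every coordinate (because $v(F_i)$ is exactly fair and $v(C_i)$ is $3$-approximately fair, and $\|v(P_i)\|_1 = \|v(C_i)\|_1 - \|v(F_i)\|_1$, so the deviation of $P_i$ equals the deviation of $C_i$); then argue that a set whose profile is within $3$ of exactly fair in each of the $\ell$ coordinates, but which contains no fairlet, must be small.

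For the second part I would reason as follows. Write $m = \|v(P_i)\|_1$ and suppose toward a contradiction that $m \ge 4f$. For each group $j$, $v_j(P_i) \ge \frac{f_j}{f} m - 3 \ge \frac{f_j}{f}\cdot 4f - 3 = 4f_j - 3 \ge f_j$ as long as $f_j \ge 1$ (and if $f_j = 0$ the constraint $v_j = 0$ is forced exactly, consistent with a fairlet needing no points of group $j$). Hence $P_i$ contains at least $f_j$ points of every group $j$, which is exactly what is needed to assemble one fairlet inside $P_i$ — contradicting maximality of $F_i$. Therefore $m < 4f$, i.e. $|P_i| < 4f$, which is the claim. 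I should double-check the edge case where some $f_j = 0$ does not break the "$\ge f_j$" step and that the $3$ in "$3$-approximately fair" is exactly the additive slack inherited by $P_i$; both look routine given the definitions in the excerpt.

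The main obstacle — really the only subtle point — is making rigorous the transfer of the approximate-fairness slack from $C_i$ to $P_i$: one must be careful that subtracting the exactly-fair vector $v(F_i)$ from $v(C_i)$ does not amplify the per-coordinate error, and that $\|v(P_i)\|_1$ really is $\|v(C_i)\|_1 - \|v(F_i)\|_1$ (which holds since all entries are nonnegative and $F_i \subseteq C_i$). Once that is in hand, the "$m \ge 4f \Rightarrow$ contains a fairlet $\Rightarrow$ contradiction" argument is a short computation. I would also remark that the constant $4$ is not tight; the argument gives $|P_i| < 4f$ with room to spare, but $4f$ is what is needed downstream, so I would not optimize it.
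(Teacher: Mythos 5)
Your proposal is correct and follows essentially the same route as the paper's proof: you transfer the additive-$3$ slack from $v(C_i)$ to $v(P_i)$ using exact fairness of $F_i$, then exploit maximality of $F_i$ (no fairlet fits inside $P_i$) together with $f_j \ge 1$ to conclude $|P_i| < 4f$. The only cosmetic difference is that you argue by contradiction from $\|v(P_i)\|_1 \ge 4f$ (showing every group would then supply $f_j$ points, yielding a fairlet), whereas the paper argues directly from the existence of a deficient group $j$ with $v_j(P_i) < f_j$; the underlying inequality is the same.
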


\begin{proof}
Consider an arbitrary $i \in [k]$. We first assume that $n_i = 0$ ; in other words, $F_i$ is empty. Let $v(C_i)$ be the profile vector of $C_i$. Since $F_i = \emptyset$, $P_i = C_i$ and there exits a group $j \in [\ell]$ such that $v_j(C_i) < f_j$. 
Further, Since $P_i = C_i$ is $3$-approximately fair, $|v_j(C_i) - \frac{f_j\cdot |P_i|}{f}| \leq 3$,
% \begin{equation*} %\label{eq:problem1}
%     |v_j(C_i) - \frac{f_j\cdot |P_i|}{f}| \leq 3,
% \end{equation*}
which implies that $\frac{f_j}{f} |P_i| \leq v_j(C_i) + 3 < f_j + 3$. Dividing both sides of the above equation by $f_j$,     $\frac{|P_i|}{f} < \frac{3}{f_j} + 1 \leq 3+1 = 4$,
% \begin{equation*}
%     \frac{|P_i|}{f} < \frac{3}{f_j} + 1 \leq 3+1 = 4,
% \end{equation*}
since $f_j \geq 1$. Thus, $|P_i|<4f$.

Now, we drop the condition $n_i = 0$. Let $v(P_i)$ be the profile vector for $P_i$. We show that not only $v(C_i)$ but also $v(P_i)$ satisfies Eq.~\eqref{eq:gamma-approx-fair} with $\gamma = 3$. Indeed, for each $j \in [\ell]$, $v_j(C_i) - v_j(P_i) = n_i f \frac{f_j}{f} = (\| v(C_i) \|_1 - \| v(P_i) \|_1) \frac{f_j}{f}$.
% \begin{equation*}
%     v_j(C_i) - v_j(P_i) = n_i f \frac{f_j}{f} = (\| v(C_i) \|_1 - \| v(P_i) \|_1) \frac{f_j}{f}.
% \end{equation*}
Thus, for each $j \in [\ell]$, $    \Bigl|v_j(P_i) - \frac{f_j}{f} \| v(P_i) \|_1\Bigr| = \Bigl|v_j(C_i) - \frac{f_j}{f} \| v(C_i) \|_1\Bigr| \leq 3$.
% \begin{equation*}
%     \Bigl|v_j(P_i) - \frac{f_j}{f} \| v(P_i) \|_1\Bigr| = \Bigl|v_j(C_i) - \frac{f_j}{f} \| v(C_i) \|_1\Bigr| \leq 3.
% \end{equation*}
Hence the same argument as for the case $n_i = 0$ shows that $|P_i| < 4f$.
\end{proof}
Next, we give an outline of our approach. 
\begin{enumerate}%roman?
    \item First, we show that there exists an exactly fair assignment $r$ for $\sJ$ of cost at most $2 \opt_{\sJ}$ that only reassigns $O(k f^2)$ points.\label{item:first}
    \item Next, we show that it is possible to find a subset $S_i \subseteq C_i$ for each $i \in [k]$ such that $|S_i| = O(k f^2)$ for all $i \in [k]$, and there exists a solution of cost at most $2 \opt_{\sJ}$ that only reassigns points in $S := \bigcup_{i \in [k]} S_i$. \label{item:second}
    %In light of this, we only need to consider the set $S$ fix the rest of the points $\bigcup_{i \in [k]} (C_i - S_i)$ in the reassignment.
    \item Finally, we apply our dynamic programming approach to reassign points in $S$. Since $|S|=O(k^2 f^2)$, the DP algorithm runs in  $(kf)^{O(\ell)}$ time.\label{item:third}
\end{enumerate}
% We will show that there exists a reassignment of cost at most $2 \opt_{\cal I'}$ such that $(a)$ the clustering after the reassignment is exactly fair and $(b)$ this reassignment only moves $O(k f^2)$ many data points. Then, we will show that it is possible to decide in advance, a subset $S_i \subseteq C_i$ for each $i \in [k]$ such that $|S_i| = O(k f^2)$ for all $i \in [k]$, and there exists a solution of cost at most $2 \opt_{\cal I'}$ that only moves points in $S := \bigcup_{i \in [k]} S_i$. In light of this, we only need to consider the set $S$ fix the rest of the points $\bigcup_{i \in [k]} (C_i - S_i)$ in the reassignment. Then, we apply our dynamic programming approach to solve the instance $S$ of size $O(k^2 f^2)$ which now runs in time $(kf)^{O(\ell)}$.

To show~(\ref{item:first}), we impose an additional assumption on the structure of the desired (re)assignment $r$. Let $C_1', \dots, C_k'$ be the clustering resulted from an exactly fair assignment of $\sJ$, where $C_i'$ is the cluster of points at location $c_i$ after the reassignment.

Let $C_1, \dots, C_k$ be the clustering before the (re)assignment. Now, we construct fairlet decompositions for $(C_1, \dots, C_k)$ and $(C_1', \dots, C_k')$. Let $E \subseteq X$ be the set of data points that are assigned to the same center before and after (re)assignment $r$. For each $i \in [k]$, let $E_i = \{x \in E: \loc_{\sJ}(x) = c_i\}$ be the set of data points in $C_i$ that are fixed by $r$. 

For every $i \in [k]$, let $F'_i \subseteq E_i$ be a maximal exactly fair subset of $E_i$. We decompose $F'_i$ into $n'_i = |F'_i|/f$ fairlets. We call this decomposition ${\sD}_E$.
%For every $i \in [k]$, let $P'_i = C_i' \setminus F_i$ be the \textit{problematic points} in $E_i$. Let ${\cal D}_E = (\sF'^{(i)}_1, \dots, \sF'^{(i)}_{n_i'}, P'_i)_{i = 1}^k$ be the fairlet decomposition for $(E_1, \dots, E_k)$.
Now, we greedily extend ${\sD}_E$ to a fairlet decomposition for $(C_1, \dots, C_k)$. Let $P_i$ be the set of problematic points in the obtained decomposition. Note that Lemma~\ref{lem:problematic} applies to $P_i$ and therefore $|P_i| < 4f$.

Similarly, we greedily extend decomposition ${\sD}_E$ to a fairlet decomposition for $(C_1', \dots, C_k')$. We refer to fairlets that we add to ${\sD}_E$ as \textit{new fairlets}. Let $\sN$ be the set of all new fairlets.

\iffalse
First, we consider the clustering before the assignment. For each $i \in [k]$, let $Q_i = C_i \setminus F'_i$. Let $F''_i \subseteq Q_i$ be a maximum size exactly fair subset of $Q_i$. Then, each $F''_i$ can be decomposed into $n''_i = |F''_i|/f$ fairlets: $\sF''^{(i)}_1, \dots, \sF''^{(i)}_{n_i''}$. For every $i \in [k]$, let $P_i = Q_i \setminus F'_i$ be the \textit{problematic points} in $Q_i$. Now, let ${\sD}_{\sC} = (\sF'^{(i)}_1, \dots, \sF'^{(i)}_{n_i'},\sF''^{(i)}_1, \dots, \sF''^{(i)}_{n_i''}, P_i)_{i = 1}^k$ be the fairlet decomposition for $(C_1, \dots, C_k)$. 

Next, we consider the clustering after the assignment. For each $i \in [k]$, let $Q'_i = C'_i \setminus F'_i$. Let $F'''_i \subseteq Q'_i$ be a maximum size exactly fair subset of $Q'_i$. Then, each $F'''_i$ can be decomposed into $n'''_i = |F'''_i|/f$ fairlets: $\sF'''^{(i)}_1, \dots, \sF'''^{(i)}_{n_i'''}$. For every $i \in [k]$. Note that since $(C_1', \dots, C_k')$ is exactly fair, there is no problematic points left. Now, let ${\sD}_{\sC'} = (\sF'^{(i)}_1, \dots, \sF'^{(i)}_{n_i'},\sF'''^{(i)}_1, \dots, \sF'''^{(i)}_{n_i'''})_{i = 1}^k$ be the fairlet decomposition for $(C'_1, \dots, C'_k)$. We call $\mathcal{N} := \bigcup_{i = 1}^k \{\sF'''^{(i)}_1, \dots, \sF'''^{(i)}_{n_i'''}\}$ the \textit{new fairlets} created by $r$.
\fi

%For each $i \in [k]$, we say that a fairlet $\sF$ in $C_i'$ is new if it is not present in $\cal A$ (i.e.,~$\sF$ is not in $C_i$ for any $i \in [k]$). Let $\mathcal{N}$ denote the set of new fairlets.

\begin{definition}[restricted assignment]
In the restricted assignment problem, we need to find a minimum cost assignment $r: \{c_1,\dots,c_k\} \times [k] \rightarrow {\sR}_{\geq 0}$ of $\sJ$ such that 
\begin{itemize}
    \item $r$ is a valid exactly fair assignment. Formally,
    \begin{align*}
        &\forall c \in\{c_1, \cdots, c_k\}, &&\sum_{i=1}^k r(c, c_i) = v(c) \\
        &\forall j \in [\ell], c \in\{c_1, \cdots, c_k\}, &&R_j(c) = \frac{f_j}{f} \| R(c) \|_1
    \end{align*}
    \item every new fairlet $\sF$ in $C_i'$ contains at least one point from $C_i$.
\end{itemize}
\end{definition}

\begin{lemma} \label{lem:restricted}
Let $\opt^R_{\sJ}$ and $\opt_{\sJ}$ denote the optimal values of the restricted and vanilla fair assignment problems with input $\sJ$ respectively. Then, $\opt^R_{\sJ} \leq 2 \opt_{\sJ}$.
\end{lemma}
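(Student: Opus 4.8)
The plan is to start from an optimal (unrestricted) exactly fair assignment $r^*$ for $\sJ$, of cost $\opt_{\sJ}$, and to transform it into a restricted assignment whose cost is at most $2\opt_{\sJ}$. Let $C_1^*, \dots, C_k^*$ be the clustering induced by $r^*$, and build the fairlet decomposition machinery exactly as set up above: take $E$ to be the set of points fixed by $r^*$ (assigned to the same center before and after), form the common sub-decomposition ${\sD}_E$ on the $E_i$'s, and greedily extend it both to $(C_1,\dots,C_k)$ (giving problematic sets $P_i$ with $|P_i| < 4f$ by Lemma~\ref{lem:problematic}) and to $(C_1^*,\dots,C_k^*)$ (giving the set $\sN$ of new fairlets). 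The only way $r^*$ can fail to be restricted is that some new fairlet $\sF \in \sN$ lying in $C_i^*$ is composed entirely of points that came from clusters other than $C_i$ — i.e. $\sF$ was ``imported wholesale'' into location $c_i$.

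The key step is a local repair argument for each such bad new fairlet. The plan is: whenever a new fairlet $\sF \subseteq C_i^*$ contains no point of $C_i$, I would like to swap it with a fairlet that does touch $C_i$. Since $\sF \subseteq C_i^* \setminus E_i$ and all of $\sF$ comes from outside $C_i$, and since $C_i$ itself is not exactly fair only up to its small problematic remainder, there must be a fairlet $\sF'$ in the $(C_1,\dots,C_k)$-decomposition that actually sits at location $c_i$ (i.e. $\sF' \subseteq C_i \setminus E_i$) — otherwise essentially all of $C_i$'s non-fixed mass would be problematic, contradicting $|P_i| < 4f$ together with the accounting that $C_i^*$ receives enough new fairlets. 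I then reroute: send $\sF'$ to $c_i$ (it is already there, so cost $0$ on that leg) in place of $\sF$, and send $\sF$ back toward wherever $\sF'$ was originally destined. Because both $\sF$ and $\sF'$ are fairlets, the swap preserves exact fairness of every cluster, and by the triangle inequality the rerouting cost of the moved points is bounded by their original cost in $r^*$ plus the cost of the $\sF'$-leg we eliminated; summing a charging argument of this type over all bad new fairlets multiplies the total by at most $2$.

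I would organize the write-up as: (i) fix $r^*$ and the two decompositions; (ii) identify the set of ``bad'' new fairlets and, for each, exhibit a partner fairlet physically located at the target center, using $|P_i|<4f$ and a counting inequality comparing $|C_i|$ and $|C_i^*|$; (iii) perform all swaps simultaneously (they are on disjoint groups of points, so they do not interfere) and verify the resulting assignment is exactly fair, assigns every point of each location, and satisfies the ``every new fairlet in $C_i'$ contains a point of $C_i$'' condition; (iv) bound the cost by $2\opt_{\sJ}$ via triangle inequality, charging each rerouted fairlet's new cost against its old cost plus the deleted leg.

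The main obstacle I anticipate is step (ii): cleanly guaranteeing that a suitable partner fairlet $\sF'$ at location $c_i$ always exists, and that the swaps can be chosen globally consistently (a bad fairlet at $c_i$ might want to partner with one whose home is $c_j$, whose own repair then cascades). I expect this is handled either by noting the swaps only ever move fairlet-sized chunks between a bad fairlet's current and original locations — so each point is moved at most once more than in $r^*$ — or by setting up a bipartite transportation/exchange argument on fairlets and invoking an integral-flow decomposition; the factor $2$ should be exactly the slack this costs. The fairness bookkeeping (every swap is fairlet-for-fairlet, hence profile-preserving) is routine, and the $|P_i|<4f$ bound from Lemma~\ref{lem:problematic} is what keeps the number of repairs, and hence the blow-up, under control.
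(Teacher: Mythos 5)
There is a genuine gap at the heart of your plan, namely step (ii). Your repair of a bad new fairlet $\sF \subseteq C_i'$ requires a partner fairlet $\sF'$ physically located at $c_i$ (i.e.\ $\sF' \subseteq C_i$, outside ${\sD}_E$), and you argue its existence by saying that otherwise ``essentially all of $C_i$'s non-fixed mass would be problematic,'' contradicting $|P_i| < 4f$. But there is no contradiction: it is perfectly possible that $C_i$ contributes \emph{no} fairlet at all outside ${\sD}_E$ --- for instance $C_i$ may consist only of problematic points (fewer than $4f$ of them), or all of its fairlets may already be fixed inside $E_i$ --- while the optimal assignment still imports whole fairlets from other clusters into $c_i$. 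In that case no partner exists and your swap cannot be performed. A second problem is that even when a partner $\sF'$ exists, your swap (``keep $\sF'$ at $c_i$, send $\sF$ to wherever $\sF'$ was destined'') is only well defined and profile-preserving if $r^*$ moves \emph{all} of $\sF'$ away from $c_i$; if $\sF'$ is only partially moved, part of $\sF$ has no destination and the profile of $C_i'$ is perturbed by a vector that is not a multiple of the fairlet profile, so exact fairness breaks. That ``moved fairlets are moved in their entirety'' property is exactly part (b) of Lemma~\ref{lem:structure1}, which is proved \emph{after} and \emph{using} Lemma~\ref{lem:restricted}, so you cannot assume it here. The cascading-consistency worry you flag is real as well, but it is downstream of these two issues.

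The paper's proof sidesteps all of this with a simpler, purely local repair that needs no partner at $c_i$: for each bad new fairlet $\sF$ whose points come from clusters $C_{j_1},\dots,C_{j_m}$ with multiplicities $w_1,\dots,w_m$, it relocates the \emph{entire} fairlet to one of its own source centers $c_{j_{s^*}}$. Exact fairness is preserved because a complete fairlet is removed from $C_i'$ and added to $C_{j_{s^*}}'$; the restriction is satisfied automatically because $\sF$ contains $w_{s^*}\ge 1$ points originally at $c_{j_{s^*}}$; and the index $s^*$ is chosen by an averaging argument (the convex combination $\sum_s \tfrac{w_s}{W} M_s$ with $M_s=\sum_t w_t\, d(c_{j_t},c_{j_s})$, bounded via the triangle inequality by twice $\sum_t w_t\, d(c_{j_t},c_i)$), so the new cost of $\sF$ is charged only against its own old cost, with factor $2$. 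Since each bad fairlet is repaired independently, there is no global matching or cascade to control. If you want to salvage your swap-based argument you would have to both prove existence of wholly-moved partner fairlets (false in general, as above) and set up a non-double-charging exchange; the source-cluster relocation with the averaging choice of $s^*$ is the missing idea that makes the factor $2$ come out cleanly.
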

\begin{proof}%%{\sc Proof of Lemma~\ref{lem:restricted}.}
Let $r$ be an optimal fair assignment of $\sJ$. Let $C_1', \dots, C_k'$ be the resulting clusters; the set $C_i'$ consists of the points assigned to $c_i$ by $r$. Now, suppose that there exits a new fairlet $\sF$ in some $C_i'$ ($i \in [k]$) such that $\sF \cap C_i = \emptyset$. Without loss of generality, assume that the points in $\sF$ are from $C_{j_1}, C_{j_2}, \dots, C_{j_m}$ for some $m \in [k]$. Let $w_t := |C_{j_t} \cap \sF|$ for every $t\in [m]$ and $W = \sum_{t=1}^m w_t$. Then, the total cost of forming $\sF$ is $\sum_{t = 1}^m w_t \cdot d(c_{j_t}, c_i)$.
% \begin{equation}
%     \sum_{t = 1}^m w_t d(c_t, c_i).
% \end{equation}
Now, for every $s\in [m]$, compute the cost of moving all points in fairlet $\sF$ to cluster $C_{j_s}$ (i.e., reassign all data points in $\sF$ to center $c_{j_s}$) and denote it as $M_s$.
\begin{equation*} %\label{eq:res:01}
\begin{split}
    M_S = \sum_{t = 1}^m w_t d(c_{j_t}, c_{j_s}) & \leq \sum_{t \in [m]\setminus\{s\}} w_t (d(c_{j_t}, c_i) + d(c_{j_s},c_i)) 
     = \Bigg(\sum_{t \in [m]\setminus\{s\}} w_t \Bigg) d(c_{j_s},c_i) 
     + \sum_{t \in [m]\setminus\{s\}} w_t d(c_{j_t}, c_i).
\end{split}
\end{equation*}
Next, consider the convex combination $\sum_{s=1}^m \frac{w_s}{W} M_s$. 
\begin{align*}
\sum_{s=1}^m \frac{w_s}{W} M_s =\sum_{s = 1}^m \frac{w_s}{W} \sum_{t = 1}^m w_t d(c_{j_t}, c_{j_s}) \leq  \sum_{s = 1}^m \frac{w_s}{W} \Bigg(\sum_{t \in [m]\setminus \{s\}} w_t \Bigg) d(c_{j_s},c_i) 
    + \sum_{s = 1}^m \frac{w_s}{W} \sum_{t \in [m]\setminus\{s\}} w_t d(c_{j_t}, c_i) 
    < 2 \sum_{t = 1}^m w_t d(c_{j_t}, c_i).
\end{align*}
This implies that there exist $s^* \in [m]$ with $C_{j_{s^*}} \cap \sF \neq \emptyset$ such that the cost $M_{s^*}$ of reassigning $\sF$ to $c_{j_{s^*}}$ is at most twice the current assignment cost of $\sF$.
% \begin{equation*}
%     \sum_{t = 1}^m w_t d(c_t, c_{s^*}) \leq 2 \sum_{t = 1}^m w_t (c_t, c_i).
% \end{equation*}

We perform this fairlet reassignment procedure to all new fairlets that do not satisfy the restricted assignment property and obtain a new restricted assignment $r'$. 
Our argument shows that the assignment cost of $r'$ is at most twice the assignment cost of $r$; hence,
\begin{align*}
    \opt^R_{\sJ} \le 2\opt_{\sJ}.
\end{align*}
\end{proof}

Next, we show that the optimal restricted assignment is even more structured.

\begin{lemma} \label{lem:structure1}
There exists an optimal solution $r'$ to the restricted assignment of $\sJ$ such that 
\begin{enumerate}[label=(\alph*)]
    \item for any $i\in [k]$ and any new fairlet $\sF \subseteq C_i'$, there exists a problematic point $x \in \sF \cap C_i$;
    \item if $r'$ moves a point $x \in \sF'$ for some fairlet $\sF' \subseteq C_i$ and $i \in [k]$, then $r'$ moves all points in $\sF'$.
\end{enumerate}
\end{lemma}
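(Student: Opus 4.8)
The plan is to start from an arbitrary optimal restricted assignment $r$ and massage it into one satisfying both structural properties, without increasing the cost, via a sequence of local exchange/rerouting arguments that respect exact fairness and the ``new fairlet touches $C_i$'' constraint.

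For part (b): suppose $r$ moves some but not all points of a fairlet $\sF' \subseteq C_i$; say $\sF'$ has points scattered among target clusters $C'_{j_1}, \dots, C'_{j_m}$ (with possibly $i \in \{j_1,\dots,j_m\}$ for the points of $\sF'$ that stay). The key observation is that a fairlet is exactly fair, so the multiset of group-labels carried by $\sF'$ is ``balanced'': the profile $v(\sF') = (f_1,\dots,f_\ell)$. I would argue that among the destinations receiving pieces of $\sF'$, we can reroute whole fairlet-sized balanced chunks. Concretely, between any two clusters $C'_{j_a}, C'_{j_b}$ that both receive pieces of several different fairlets, we can swap points so that each fairlet ends up less fragmented, and by the same convexity/triangle-inequality averaging trick used in Lemma~\ref{lem:restricted} (averaging the cost of consolidating all of $\sF'$ into each single destination $c_{j_s}$ with weights $w_s/W$), some consolidation does not increase cost. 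The subtlety is that naively consolidating $\sF'$ into one cluster may break exact fairness of the clusters it was pulled from or pushed into; but since $\sF'$ is itself exactly fair, removing all of $\sF'$ from a cluster and adding all of $\sF'$ to another cluster preserves exact fairness of both. So the operation ``take a fragmented fairlet and move it entirely to whichever destination minimizes cost (among those already receiving a nonempty piece, to keep the restricted-assignment property)'' is feasible and non-increasing in cost. Repeating until no fairlet is fragmented establishes (b); I would track a potential function such as the number of (fairlet, destination-cluster) incidences, or $\sum$ over fairlets of (number of distinct destinations $-1$), to see the process terminates.

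For part (a): once (b) holds, every cluster $C'_i$ decomposes (on the moved part) into whole relocated fairlets plus the fixed part $E_i$; the new fairlets of $C'_i$ are exactly the fairlets assembled out of moved points. The restricted-assignment constraint already forces each new fairlet $\sF \subseteq C'_i$ to contain a point of $C_i$; I need to upgrade ``a point of $C_i$'' to ``a \emph{problematic} point of $C_i$.'' Suppose a new fairlet $\sF \subseteq C'_i$ meets $C_i$ only in non-problematic points, i.e. $\sF \cap C_i$ lies inside some fairlet(s) of the decomposition ${\sD}_E$ extended on $C_i$. By (b) these whole source-fairlets were moved; but a point of $E_i$ is by definition fixed by the optimal (vanilla) assignment used to build ${\sD}_E$ — here I have to be careful, because the decomposition ${\sD}_E$ was defined relative to the set $E$ of points fixed by the \emph{specific} optimal assignment we started from, whereas $r$ is a (possibly different) optimal restricted assignment. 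The cleanest fix is: redefine the fairlet decomposition of $(C_1,\dots,C_k)$ greedily so as to put as many points moved by $r$ as possible into whole fairlets — i.e. build the decomposition adaptively to $r'$ — so that any non-problematic point is in a fairlet all of whose points $r'$ leaves fixed. Then a new fairlet touching $C_i$ only in non-problematic points would have to be built entirely from points that $r'$ does not move, contradicting that it is a \emph{new} fairlet (new fairlets consist of moved points). Hence its intersection with $C_i$ must contain a problematic point. I would make this precise by first fixing $r'$ from part (b), then choosing the fairlet decomposition of the ``before'' clustering to maximize the number of $r'$-moved points covered by whole fairlets, and arguing a leftover analysis as in Lemma~\ref{lem:problematic}.

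The main obstacle I anticipate is the interaction between the two structural requirements and exact fairness during the rerouting in part (b): each elementary move must simultaneously (i) not increase cost, (ii) preserve exact fairness of every cluster, and (iii) preserve the restricted-assignment property (no new fairlet becomes disjoint from its cluster's ``before'' set). Using the fact that fairlets are self-contained exactly-fair units is what makes (ii) automatic, and restricting consolidation destinations to clusters already receiving a piece of the fairlet handles (iii); the averaging argument handles (i). The second delicate point is making the fairlet decomposition of the ``before'' clustering depend on the final $r'$ rather than on the original optimal assignment, so that ``problematic'' is defined consistently — I would state this re-choice explicitly at the start of the proof and note that Lemma~\ref{lem:problematic}'s bound $|P_i| < 4f$ still applies because it only used $3$-approximate fairness of $C_i$, not the particular decomposition.
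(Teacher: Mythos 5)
There is a genuine gap in your argument for part (b). Your elementary operation takes a fairlet $\sF'\subseteq C_i$ whose points are scattered over several clusters and consolidates all of $\sF'$ into a single destination, and you justify feasibility by noting that removing an entire fairlet from one cluster and adding it to another preserves exact fairness. But that justification only applies when $\sF'$ currently sits wholly inside one cluster, which is precisely not the situation you are repairing: when $\sF'$ is fragmented, consolidation changes the profile of every involved cluster by a vector that is not a multiple of $(f_1,\dots,f_\ell)$ (each cluster loses or gains only a piece of the fairlet), so exact fairness of those clusters is destroyed and the move is infeasible. The auxiliary idea of ``swapping points so that each fairlet ends up less fragmented'' would preserve profiles only for same-group swaps and is never developed into an argument. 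Moreover, even granting feasibility, the averaging trick of Lemma~\ref{lem:restricted} only shows that some consolidation costs at most \emph{twice} the current cost of the fragmented fairlet, not that it is non-increasing (consolidation can strictly increase cost, e.g., when several points of $\sF'$ are currently assigned at distance zero); since the lemma, and its use in Corollary~\ref{cor:structure9}, require $r'$ to remain an \emph{optimal} restricted assignment, cost-increasing local moves are not available to you.

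The missing idea is that one never needs to reroute fairlets physically. The paper's proof first simplifies, for each group $j$ separately, the reassignment multigraph $G_j$ on the centers, short-cutting an incoming edge $(c',c)$ together with an outgoing edge $(c,c'')$ into $(c',c'')$; this keeps all cluster profiles (same group) and cannot increase cost (triangle inequality), and it terminates with every center being a source, a sink, or isolated in each $G_j$. It then exploits interchangeability of same-group points at the same location to re-choose, at zero change in cost and profiles, \emph{which} points leave each $C_i$: problematic points of each group first, then $\sF_1$, then $\sF_2$, and so on. Properties (a) and (b) follow from a case analysis using the source/sink structure (if a non-problematic group-$j$ point leaves $C_i$, then no group-$j$ point enters $C_i$ and all earlier points in the order have left, so no new fairlet can appear and fairlets are moved wholesale). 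Your observations for part (a) — that the decomposition must be coordinated with the final $r'$ and that Lemma~\ref{lem:problematic} only uses approximate fairness — are reasonable, but your specific fixes are too strong: one cannot in general choose the decomposition so that every non-problematic point lies in a fairlet left entirely fixed by $r'$ (whole fairlets may legitimately be moved, consistently with (b)), and new fairlets need not consist solely of moved points; the paper resolves both issues through the relabeling order above rather than by re-choosing the decomposition after the fact.
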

\begin{proof}%{\sc Proof of Lemma~\ref{lem:structure1}.}
Let $r$ be the optimal solution for the restricted assignment problem. We apply two transformations to it.

\paragraph{Transformation I: Simplifying the reassignment multigraph.} For every group $j\in[\ell]$, define the reassignment multigraph $G_j$ on the set of centers $c_1,\dots, c_k$. There is a \textit{directed} edge from $c_a$ to $c_b\neq c_a$ in $G_j$ for every point $x$ that is reassigned from $c_a$ to $c_b$ by $r$. We label such an edge with label $x$. Note that there are $r_j(c_a, c_b)$ parallel edges from $c_a$ to $c_b$ in $G_j$. 

Assume that there is a vertex $c$ that has both incoming and outgoing edges in $G_j$. Let $(c', c)$ be an incoming edge and $(c,c'')$ be an outgoing edge (it is possible that $c'=c''$). Denote their labels by $x$ and $x'$. By the definition of the graph, $r$ reassigns point $x\in X_j$ from $c'$ to $c$ and point $x'\in X_j$ from $c$ to $c''$. We modify $r$ as follows (see Figure~\ref{fig}).
\begin{figure}[b]
\centering
\begin{tabular}{l|ccc}
&\textbf{original assignment $r$}&& \textbf{new assignment $\tilde r$}\\ \hline
assignment& $\xymatrix @C=3pc {
& c \ar[dr]^{x'}&
\\
c'\ar[ur]^{x} & & c''}$  & \qquad\qquad &$\xymatrix @C=3pc {
& c \ar@{..>}@(dl,dr)^{x'}&
\\
c'\ar[rr]^{x} & & c''}$ \\ \\
cost& $d(c',c) + d(c,c'')$ & & $d(c', c'')$
\end{tabular}
\caption{We reassign $x$ and $x'$. When we do the reassignment, we replace two edges $(c',c)$ and $(c,c'')$ with one edge $(c', c'')$ in graph $G_j$. Note that there are no loops in $G_j$. We have drawn a dotted arrow from $c$ to itself in the figure simply to indicate that $\tilde r$ does not reassign $x'$.}\label{fig}
\end{figure}
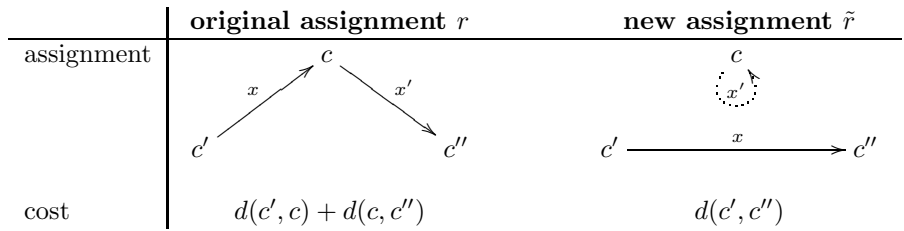
We only change the assignment of points $x$ and $x'$: we reassign $x$ to $c''$ and $x'$ to~$c$. Denote the obtained assignment $\tilde r$. Since $x$ and $x''$ belong to the same group $X_j$, we do not change the profiles of any clusters and thus $\tilde r$ is exactly fair. Further, if $r$ assigns a point $y$ from $c_i$ to $c_i$ then so does $\tilde r$. Therefore, $\tilde r$ is also a solution to the \textit{restricted} assignment problem. Finally, its cost is at most that of $r$: indeed the assignment costs of all points other than $x$ and $x'$ do not change; the total assignment cost of $x$ and $x'$ was equal to $d(c',c) + d(c,c'')$ for $r$ and is equal to $d(c',c'')$ for $\tilde r$; we have 
$d(c',c) + d(c,c'') \geq d(c',c'')$ by the triangle inequality.

We perform this step over and over until there are no vertices that simultaneously have incoming and outgoing edges. (Note that each time we perform this step, the number of edges in one of the graphs $G_j$ decreases by 1 and does not change in all other graphs $G_{j'}$ (for every $j'\in [\ell]\setminus\{j\}$). So we will necessarily stop at some point.)

Denote the obtained assignment by $r'$. It is a fair assignment and its cost is at most that of $r$. Consider the reassignment multigraphs $\{G_j'\}_{j\in[\ell]}$ for $r'$. Each vertex $c$ in $G_j'$ is either a source, sink, or isolated vertex. 

\paragraph{Transformation II: Specifying which points $r'$ moves.} We go over all $i\in [k]$ and $j\in [\ell]$ such that $c_i$ is a source-vertex in $G_j'$. Assignment $r'$ moves $M = \sum_{i'\neq i} r_j(c_i,c_{i'})$ points in group $j$ from $C_i$ to other clusters. Since all points in $C_i \cap X_j$ are interchangeable, we are free to choose any subset of $M$ points in $C_i \cap X_j$ to move. We choose this set in a special way. Let $\sF_1, \dots, \sF_{n_i}$ be fairlets in the fairlet decomposition we constructed for $C_i$. We order all points in $C_i \cap X_j$ as follows: first points from $P_i \cap X_j$, then from $\sF_1 \cap X_j$, then from $\sF_2\cap X_j$, then from $\sF_3\cap X_j$, etc (we order points inside each of these sets arbitrarily). We choose the subset of the first $M$ points w.r.t. this order and assume that $r'$ moves them. Note that $r'$ moves non-problematic points from $C_i \cap X_j$ to another cluster only if it moves all problematic points from $C_i \cap X_j$.

We have defined $r'$. Now we prove that it satisfies properties (a) and (b). To this end, we consider a cluster $C_i$ and analyze the following two cases.
\paragraph{Case 1: Assignment $r'$ does not move any non-problematic points from cluster $C_i$ to other clusters. All points (if any) it moves from $C_i$ to other clusters are problematic.} Then $C_i\setminus P_i \subseteq C_i'$ and thus all fairlets present in the decomposition for $C_i$ are also present in that for $C_i'$. Hence, every non-problematic point belongs to the same fairlet in $C_i$ and $C_i'$. In particular, only problematic points may belong to new fairlets. On the other hand, since $r'$ is a solution to the \textit{restricted} assignment problem, at least one point $x$ in each new fairlet $\sF$ in $C_i'$ must be from $C_i$. It follows that this point must be problematic. We proved that property (a) holds in this case. Since $r$ does not move any points from fairlets ${\sF}' \subseteq C_i$ to other clusters, property (b) trivially holds.

\paragraph{Case 2: Assignment $r'$ moves at least one non-problematic point from $C_i$ to another cluster.} 
Let $t$ be maximal index such that $r'$ moves points from fairlet $\sF_t$ to other clusters. Since we are guaranteed that $r'$ moves at least one non-problematic point, $t$ is well defined.

By our choice of $t$, $r'$ does not move any points from fairlets $\sF_{t+1}, \dots, \sF_{n_i}$ and therefore these fairlets are still present in $C_i'$. Now we show that there are no other fairlets in $C_i'$; in particular, there are no new fairlets.

Consider a point $x$ in $\sF_t$ that $r'$ moves to some other cluster $C_{i'}$. Assume that $x \in X_j$. Then edge $(c_i, c_{i'})$ -- an edge outgoing from $c_i$ -- is present in $G_j'$. Hence, $c_i$ cannot be a sink or isolated vertex and must be a source-vertex in $G_j'$. Therefore, $r'$ does not move any point $y\in X_j$ from another cluster to $C_i$. Also, all points in $C_i \cap X_j$ that precede $x$ w.r.t. the order we considered in the Transformation II step are moved to other clusters by $r'$; in particular, all points in $P_i \cap X_j, \sF_1 \cap X_j,  \dots, \sF_{t-1} \cap X_j$ as well as $x$ itself are moved to other clusters by $r'$.
Therefore, $C_i'\cap X_j$ consists of points from $\sF_{t+1}\cap X_j, \dots, \sF_{n_i} \cap X_j$ and hypothetically some points from $\sF_t \cap X_j$. However, point $x$ from $\sF_t \cap X_j$ is assigned to another cluster and thus there are not enough points from group $j$ left in $\sF_t$ to form another fairlet. We conclude that $\sF_{t+1}, \dots, \sF_{n_i}$ are the only fairlets in $C_i'$. Since $C_i'$ satisfies the exact fairness constraints, each point in $C_i'$ lies in a fairlet, that is, in one of these fairlets. Now (a) trivially holds since there are no new fairlets in $C'$; (b) holds since $r'$ moves all points from fairlets $\sF_1, \dots, \sF_t$ and no points from fairlets $\sF_{t+1},\dots, \sF_{n_i}$.
\end{proof}

Lemma~\ref{lem:structure1} immediately implies the following result which proves~\ref{item:first} and \ref{item:second}.

\begin{corollary} \label{cor:structure9}
Suppose that $\sJ$ is a $3$-approximately fair instance. 
%For any instance $\mathcal{A}$ to a $3$-reassignment problem,
There exists an assignment $r'$ of cost at most $2 \opt_{\sJ}$ that only moves $O(k f^2)$ points. Furthermore, we can choose a subset $S$ of data points such that $|S| = O(k^2 f^2)$ and the solution $r'$ only moves points in $S$.
\end{corollary}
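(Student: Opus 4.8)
The plan is to read off both claims from Lemma~\ref{lem:structure1}, Lemma~\ref{lem:problematic}, and Lemma~\ref{lem:restricted}, using only the elementary fact that two data points at the same location and in the same group are interchangeable.

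Let $r'$ be the optimal restricted assignment furnished by Lemma~\ref{lem:structure1}. By Lemma~\ref{lem:restricted} its cost is $\opt^R_{\sJ} \le 2\opt_{\sJ}$, which is the required cost bound. To bound the number of points $r'$ moves, note that the set of moved data points coincides, with no repetition, with the set of points ``incoming'' to some target cluster $C_i'$ --- a point moved out of $C_a$ and into $C_b'$ is incoming to $C_b'$, and conversely. Fix the greedy fairlet decomposition of $(C_1',\dots,C_k')$ extending $\sD_E$. Then $C_i' = F_i' \sqcup (\text{union of the new fairlets of } C_i')$, and since $F_i' \subseteq E_i \subseteq C_i$ every incoming point of $C_i'$ lies in some new fairlet. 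By part (a) of Lemma~\ref{lem:structure1} each new fairlet of $C_i'$ contains a distinct problematic point of $C_i$, so the number of new fairlets of $C_i'$ is at most $|P_i| < 4f$ by Lemma~\ref{lem:problematic}; hence $C_i'$ receives fewer than $4f^2$ incoming points, and summing over $i \in [k]$ gives that $r'$ moves fewer than $4kf^2 = O(kf^2)$ points.

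For the set $S$, let $T_i$ be the set of points $r'$ moves out of $C_i$. By part (b) of Lemma~\ref{lem:structure1}, $T_i$ is the union of some problematic points of $C_i$ together with $t_i$ entire fairlets of $C_i$'s decomposition, so $|T_i \cap X_j| \le |P_i \cap X_j| + t_i f_j$ for every $j \in [\ell]$; and since $\sum_i t_i f \le \sum_i |T_i| < 4kf^2$, we get $t_i < 4kf$ for every $i$. The key point is that $n_i = \min_j \lfloor v_j(C_i)/f_j \rfloor$ and the leftover counts $|P_i \cap X_j| = v_j(C_i) - n_i f_j$ depend only on the profile $v(C_i)$, not on which fairlet decomposition of $C_i$ one chooses. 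So I can build, in polynomial time, a set $S_i \subseteq C_i$ containing exactly $\min\bigl(v_j(C_i),\, |P_i \cap X_j| + \lceil 4kf \rceil f_j\bigr)$ points from each group $j$; then $|S_i| \le |P_i| + \lceil 4kf \rceil f = O(kf^2)$, and, because $C_1,\dots,C_k$ partition $X$, $|S| = \sum_i |S_i| = O(k^2 f^2)$. Since $|S_i \cap X_j| \ge |P_i \cap X_j| + t_i f_j \ge |T_i \cap X_j|$ for every $j$, there is a group-preserving injection $\sigma_i : T_i \to S_i$; modify $r'$ by sending $\sigma_i(x)$ to the $r'$-destination of $x$ for each $x \in T_i$ and keeping every other point of $C_i$ at $c_i$. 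Since all points of $C_i$ lie at location $c_i$, this changes neither the cost nor any cluster profile ($\sigma_i$ preserves groups, so $v(\sigma_i(T_i)) = v(T_i)$ coordinatewise); hence the modified assignment is still exactly fair, has cost at most $2\opt_{\sJ}$, moves $O(kf^2)$ points, and moves only points in $S$.

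The only non-mechanical point is an apparent circularity in the construction of $S$: Lemma~\ref{lem:structure1} describes $r'$ relative to the fairlet decomposition of $(C_1,\dots,C_k)$ obtained by extending $\sD_E$, but $\sD_E$ is itself defined from the set of points that $r'$ fixes, so $S$ cannot literally be read off that decomposition without already knowing $r'$. This is exactly why the bound loosens from $O(kf^2)$ (the genuine number of moved points) to $O(k^2 f^2)$: being unable to tell which $\le \lceil 4kf \rceil$ fairlets and which $< 4f$ problematic points of a given cluster $r'$ relocates, one over-provisions $S_i$ with that many fairlets' worth of points plus all of $P_i$'s worth per cluster, and interchangeability together with the decomposition-independence of $n_i$ and of the leftover group counts guarantees this over-provisioned, explicitly constructible $S$ still admits a cheap respectful assignment. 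Everything else is routine counting.
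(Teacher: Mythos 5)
Your proof is correct and follows essentially the same route as the paper: take the optimal restricted assignment from Lemma~\ref{lem:structure1}, use property (a) together with Lemma~\ref{lem:problematic} to bound the new fairlets (hence the moved points) by $O(kf^2)$, and use property (b) to over-provision $S$ with roughly $4kf$ fairlets plus the problematic points per cluster. The only difference is that you make explicit what the paper compresses into ``we can assume $r'$ only moves the points in $S$'' --- namely the group-preserving relabeling of co-located points and the observation that $n_i$ and the leftover group counts are independent of the (a priori unknown, $r'$-dependent) fairlet decomposition --- which is a welcome clarification rather than a different argument.
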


\begin{proof}
Let $r'$ be an optimal restricted assignment that satisfies both conditions in Lemma~\ref{lem:structure1}. In the clustering defined by $r'$, every new fairlet contains a problematic point from the original cluster of the center they are assigned to by $r'$.
%%; in other words, $|\sB_{r'}| =0$. %% We no longer use this notation
Since by Lemma~\ref{lem:problematic} there are only $4kf$ problematic points in total, there are at most $4kf$ many new fairlets. Thus, $r'$ moves at most $4kf^2$ many points. Moreover, by Lemma~\ref{lem:restricted}, the cost of $r'$ is at most $2 \opt_{\sJ}$.

Since we know that $r'$ moves at most $4kf^2$ many points and $r'$ satisfies condition $(b)$ in Lemma~\ref{lem:structure1}, $r'$ would move at most $4kf$ fairlets in each cluster. Recall that each $C_i$ contains $n_i$ fairlets. From each cluster $C_i$, we pick $\min(n_i, 4kf)$ many fairlets and add them to the set $S$. Then, we add all the problematic points to $S$. Then, $|S| \leq 4k^2f^2 + 4kf$ and we can assume that $r'$ only moves the points in $S$.
\end{proof}

Now, we can solve the assignment problem by passing the set $S$ as input to our dynamic programming approach described in Section~\ref{sec:DP-assignment}.

\begin{proof}[{\sc Proof of Theorem~\ref{thm:exact-fair}.}]
Let $\sI$ be an instance of the exactly fair $k$-median problem. First, we perform location consolidation on $\sI$ to obtain the instance $\sI'$ as described in Section~\ref{sec:location_consolidation}. Then, we run the assignment algorithm by~\cite{bera2019fair} on $\sI'$. Let $\sI''$ be the resulting instance. By Theorem~\ref{thm:additive-assignment}, $\sI''$ is $3$-approximately fair. By Corollary~\ref{cor:structure9}, we can pick a subset $S$ of data points such that $|S| = O(k^2f^2)$ and there exists an solution $r'$ of cost at most $2 \opt_{\sI''}$ that only moves points in $S$. Now, we apply the dynamic programming algorithm in Theorem~\ref{thm:main} on $S$ to obtain a solution of cost at most $O(\log k) \opt_{\sI''}$ with high probability that runs in $|S|^{O(\ell)} \log n = (kf)^{O(\ell)} \log n$ time. Since all the previous steps take $\poly(n)$ time, the total running time of this algorithm is $\poly(n) + (kf)^{O(\ell)} \log n$. 

Now, by Lemma~\ref{lem:reassignment} and Claim~\ref{clm:reduction}, the total cost of our solution is $O(\log k) \opt_{\sI}$. 
\end{proof}

\section{Conclusions}
In this paper, we study the fair $k$-median problem. We present an $O(\log k)$-approximation algorithm that runs in time $n^{O(\ell)}$. We further showed that our algorithm works in a more general setting where the fairness requirements are specified as an arbitrary set of fair profiles. This notion ``profile-based fairness'' captures a richer class of fairness requirements that cannot be handled by the previously known approaches for fair representation clustering. 

In addition, in the special case of exact fairness, we present an $O(\log k)$-approximation algorithm that runs in $(kf)^{O(\ell)}\log n + \poly(n)$ time, where $f$ is the size of a fairlet. 

Our paper shows that there exists approximation algorithms for fair representation clustering with $O(1)$ protected classes that run in polynomial time. It remains as an exciting question whether polynomial time $O(\log k)$-approximation can be achieved for the problem when $\ell = \Omega(1)$.     
%Future works include finding a constant factor approximation algorithm whose runtime is polynomial in $n,k$ and $\ell$. 
An approach that can be potentially helpful to resolve the previous question is to extend the ``reassignment'' method used in Section~\ref{sec:exact} to the more general setting of representation fairness (as in Section~\ref{sec:algorithm}).

%\section{Acknowledgment}
%This work is funded by DARPA and NSF.

\bibliographystyle{abbrvnat}
\bibliography{fair-cls}

\begin{thebibliography}{41}
\providecommand{\natexlab}[1]{#1}
\providecommand{\url}[1]{\texttt{#1}}
\expandafter\ifx\csname urlstyle\endcsname\relax
  \providecommand{\doi}[1]{doi: #1}\else
  \providecommand{\doi}{doi: \begingroup \urlstyle{rm}\Url}\fi

\bibitem[Abbasi et~al.(2021)Abbasi, Bhaskara, and
  Venkatasubramanian]{abbasi2020fair}
M.~Abbasi, A.~Bhaskara, and S.~Venkatasubramanian.
\newblock Fair clustering via equitable group representations.
\newblock In \emph{Proceedings of the 2021 ACM Conference on Fairness,
  Accountability, and Transparency}, page 504–514, 2021.

\bibitem[Ahmadian et~al.(2019)Ahmadian, Epasto, Kumar, and
  Mahdian]{ahmadian2019clustering}
S.~Ahmadian, A.~Epasto, R.~Kumar, and M.~Mahdian.
\newblock Clustering without over-representation.
\newblock In \emph{Proceedings of the SIGKDD International Conference on
  Knowledge Discovery \& Data Mining}, pages 267--275, 2019.

\bibitem[Angelidakis et~al.(2017)Angelidakis, Makarychev, and
  Makarychev]{Angelidakis17}
H.~Angelidakis, K.~Makarychev, and Y.~Makarychev.
\newblock Algorithms for stable and perturbation-resilient problems.
\newblock In \emph{Proceedings of the Symposium on Theory of Computing}, pages
  438--451, 2017.

\bibitem[Backurs et~al.(2019)Backurs, Indyk, Onak, Schieber, Vakilian, and
  Wagner]{backurs2019scalable}
A.~Backurs, P.~Indyk, K.~Onak, B.~Schieber, A.~Vakilian, and T.~Wagner.
\newblock Scalable fair clustering.
\newblock In \emph{Proceedings of the International Conference on Machine
  Learning}, pages 405--413, 2019.

\bibitem[Bandyapadhyay et~al.(2021)Bandyapadhyay, Fomin, and
  Simonov]{bandyapadhyay2020coresets}
S.~Bandyapadhyay, F.~V. Fomin, and K.~Simonov.
\newblock On coresets for fair clustering in metric and {E}uclidean spaces and
  their applications.
\newblock In \emph{Proceedings of the International Colloquium on Automata,
  Languages, and Programming}, pages 23:1--23:15, 2021.

\bibitem[Bartal(1996)]{bartal1996probabilistic}
Y.~Bartal.
\newblock Probabilistic approximation of metric spaces and its algorithmic
  applications.
\newblock In \emph{Proceedings of the Conference on Foundations of Computer
  Science}, pages 184--193, 1996.

\bibitem[Bartal(1998)]{bartal1998approximating}
Y.~Bartal.
\newblock On approximating arbitrary metrices by tree metrics.
\newblock In \emph{Proceedings of the Symposium on Theory of Computing}, pages
  161--168, 1998.

\bibitem[Bera et~al.(2019)Bera, Chakrabarty, Flores, and
  Negahbani]{bera2019fair}
S.~Bera, D.~Chakrabarty, N.~Flores, and M.~Negahbani.
\newblock Fair algorithms for clustering.
\newblock In \emph{Advances in Neural Information Processing Systems}, pages
  4955--4966, 2019.

\bibitem[Bercea et~al.(2019)Bercea, Gro{\ss}, Khuller, Kumar, R{\"o}sner,
  Schmidt, and Schmidt]{bercea2019cost}
I.~O. Bercea, M.~Gro{\ss}, S.~Khuller, A.~Kumar, C.~R{\"o}sner, D.~R. Schmidt,
  and M.~Schmidt.
\newblock On the cost of essentially fair clusterings.
\newblock In \emph{Approximation, Randomization, and Combinatorial
  Optimization. Algorithms and Techniques}, 2019.

\bibitem[B{\"o}hm et~al.(2020)B{\"o}hm, Fazzone, Leonardi, and
  Schwiegelshohn]{bohm2020fair}
M.~B{\"o}hm, A.~Fazzone, S.~Leonardi, and C.~Schwiegelshohn.
\newblock Fair clustering with multiple colors.
\newblock \emph{arXiv preprint arXiv:2002.07892}, 2020.

\bibitem[Brubach et~al.(2020)Brubach, Chakrabarti, Dickerson, Khuller,
  Srinivasan, and Tsepenekas]{brubach2020pairwise}
B.~Brubach, D.~Chakrabarti, J.~Dickerson, S.~Khuller, A.~Srinivasan, and
  L.~Tsepenekas.
\newblock A pairwise fair and community-preserving approach to $k$-center
  clustering.
\newblock In \emph{Proceedings of the International Conference on Machine
  Learning}, pages 1178--1189, 2020.

\bibitem[Chakrabarty and Negahbani(2021)]{chakrabarty2021better}
D.~Chakrabarty and M.~Negahbani.
\newblock Better algorithms for individually fair $ k $-clustering.
\newblock \emph{arXiv preprint arXiv:2106.12150}, 2021.

\bibitem[Charikar et~al.(2002)Charikar, Guha, Tardos, and
  Shmoys]{charikar2002constant}
M.~Charikar, S.~Guha, {\'E}.~Tardos, and D.~B. Shmoys.
\newblock A constant-factor approximation algorithm for the $k$-median problem.
\newblock \emph{Journal of Computer and System Sciences}, 65\penalty0
  (1):\penalty0 129--149, 2002.

\bibitem[Chen et~al.(2019)Chen, Fain, Lyu, and
  Munagala]{chen2019proportionally}
X.~Chen, B.~Fain, L.~Lyu, and K.~Munagala.
\newblock Proportionally fair clustering.
\newblock In \emph{International Conference on Machine Learning}, pages
  1032--1041. PMLR, 2019.

\bibitem[Chierichetti et~al.(2017)Chierichetti, Kumar, Lattanzi, and
  Vassilvitskii]{chierichetti2017fair}
F.~Chierichetti, R.~Kumar, S.~Lattanzi, and S.~Vassilvitskii.
\newblock Fair clustering through fairlets.
\newblock In \emph{Advances in Neural Information Processing Systems}, pages
  5036--5044, 2017.

\bibitem[Chlamt{\'a}{\v{c}} et~al.(2022)Chlamt{\'a}{\v{c}}, Makarychev, and
  Vakilian]{chlamtavc2022approximating}
E.~Chlamt{\'a}{\v{c}}, Y.~Makarychev, and A.~Vakilian.
\newblock Approximating fair clustering with cascaded norm objectives.
\newblock In \emph{Proceedings of the 2022 Annual ACM-SIAM Symposium on
  Discrete Algorithms (SODA)}, pages 2664--2683, 2022.

\bibitem[Chouldechova(2017)]{chouldechova2017fair}
A.~Chouldechova.
\newblock Fair prediction with disparate impact: A study of bias in recidivism
  prediction instruments.
\newblock \emph{Big data}, 5\penalty0 (2):\penalty0 153--163, 2017.

\bibitem[Chouldechova and Roth(2020)]{chouldechova2018frontiers}
A.~Chouldechova and A.~Roth.
\newblock A snapshot of the frontiers of fairness in machine learning.
\newblock \emph{Communications of the ACM}, 63\penalty0 (5):\penalty0 82--89,
  2020.

\bibitem[Esmaeili et~al.(2020)Esmaeili, Brubach, Tsepenekas, and
  Dickerson]{esmaeili2020probabilistic}
S.~Esmaeili, B.~Brubach, L.~Tsepenekas, and J.~Dickerson.
\newblock Probabilistic fair clustering.
\newblock \emph{Advances in Neural Information Processing Systems}, 33, 2020.

\bibitem[Esmaeili et~al.(2021)Esmaeili, Brubach, Srinivasan, and
  Dickerson]{esmaeili2021fair}
S.~A. Esmaeili, B.~Brubach, A.~Srinivasan, and J.~P. Dickerson.
\newblock Fair clustering under a bounded cost.
\newblock \emph{Advances in Neural Information Processing Systems}, 2021.

\bibitem[Fakcharoenphol et~al.(2004)Fakcharoenphol, Rao, and
  Talwar]{fakcharoenphol2004tight}
J.~Fakcharoenphol, S.~Rao, and K.~Talwar.
\newblock A tight bound on approximating arbitrary metrics by tree metrics.
\newblock \emph{Journal of Computer and System Sciences}, 69\penalty0
  (3):\penalty0 485--497, 2004.

\bibitem[Feldman et~al.(2015)Feldman, Friedler, Moeller, Scheidegger, and
  Venkatasubramanian]{feldman2015certifying}
M.~Feldman, S.~A. Friedler, J.~Moeller, C.~Scheidegger, and
  S.~Venkatasubramanian.
\newblock Certifying and removing disparate impact.
\newblock In \emph{Proceedings of the SIGKDD International Conference on
  Knowledge Discovery and Data Mining}, pages 259--268, 2015.

\bibitem[Ghadiri et~al.(2021)Ghadiri, Samadi, and Vempala]{ghadiri2020fair}
M.~Ghadiri, S.~Samadi, and S.~Vempala.
\newblock Socially fair $k$-means clustering.
\newblock In \emph{Proceedings of the 2021 ACM Conference on Fairness,
  Accountability, and Transparency}, pages 438--448, 2021.

\bibitem[Huang et~al.(2019)Huang, Jiang, and Vishnoi]{huang2019coresets}
L.~Huang, S.~Jiang, and N.~Vishnoi.
\newblock Coresets for clustering with fairness constraints.
\newblock In \emph{Proceedings of the Conference on Neural Information
  Processing Systems}, 2019.

\bibitem[Jones et~al.(2020)Jones, Nguyen, and Nguyen]{jones2020fair}
M.~Jones, H.~Nguyen, and T.~Nguyen.
\newblock Fair $k$-centers via maximum matching.
\newblock In \emph{Proceedings of the International Conference on Machine
  Learning}, pages 4940--4949, 2020.

\bibitem[Jung et~al.(2020)Jung, Kannan, and Lutz]{jung2019center}
C.~Jung, S.~Kannan, and N.~Lutz.
\newblock A center in your neighborhood: Fairness in facility location.
\newblock In \emph{Proceedings of the Symposium on Foundations of Responsible
  Computing}, page 5:1–5:15, 2020.

\bibitem[Kariv and Hakimi(1979)]{KarivHakimi}
O.~Kariv and S.~L. Hakimi.
\newblock An algorithmic approach to network location problems. ii: The
  $p$-medians.
\newblock \emph{SIAM Journal on Applied Mathematics}, 37\penalty0 (3):\penalty0
  539--560, 1979.

\bibitem[Kearns and Roth(2019)]{kearns2019ethical}
M.~Kearns and A.~Roth.
\newblock \emph{The ethical algorithm: The science of socially aware algorithm
  design}.
\newblock Oxford University Press, 2019.

\bibitem[Kleinberg et~al.(2017)Kleinberg, Mullainathan, and
  Raghavan]{kleinberg2017inherent}
J.~Kleinberg, S.~Mullainathan, and M.~Raghavan.
\newblock Inherent trade-offs in the fair determination of risk scores.
\newblock In \emph{Proceedings of the Innovations in Theoretical Computer
  Science}, 2017.

\bibitem[Kleinberg et~al.(2018)Kleinberg, Lakkaraju, Leskovec, Ludwig, and
  Mullainathan]{kleinberg2018human}
J.~Kleinberg, H.~Lakkaraju, J.~Leskovec, J.~Ludwig, and S.~Mullainathan.
\newblock Human decisions and machine predictions.
\newblock \emph{The quarterly journal of economics}, 133\penalty0 (1):\penalty0
  237--293, 2018.

\bibitem[Kleindessner et~al.(2019)Kleindessner, Awasthi, and
  Morgenstern]{kleindessner2019fair}
M.~Kleindessner, P.~Awasthi, and J.~Morgenstern.
\newblock Fair {$k$}-center clustering for data summarization.
\newblock In \emph{Proceedings of the International Conference on Machine
  Learning}, pages 3448--3457, 2019.

\bibitem[Kleindessner et~al.(2020)Kleindessner, Awasthi, and
  Morgenstern]{kleindessner2020notion}
M.~Kleindessner, P.~Awasthi, and J.~Morgenstern.
\newblock A notion of individual fairness for clustering.
\newblock \emph{arXiv preprint arXiv:2006.04960}, 2020.

\bibitem[Li and Svensson(2016)]{li2016approximating}
S.~Li and O.~Svensson.
\newblock Approximating {$k$}-median via pseudo-approximation.
\newblock \emph{SIAM Journal on Computing}, 45\penalty0 (2):\penalty0 530--547,
  2016.

\bibitem[Mahabadi and Vakilian(2020)]{mahabadi2020individual}
S.~Mahabadi and A.~Vakilian.
\newblock Individual fairness for $k$-clustering.
\newblock In \emph{Proceedings of the International Conference on Machine
  Learning}, pages 6586--6596, 2020.

\bibitem[Makarychev and Vakilian(2021)]{makarychev2021approximation}
Y.~Makarychev and A.~Vakilian.
\newblock Approximation algorithms for socially fair clustering.
\newblock In \emph{Proceedings of the Conference on Learning Theory}, pages
  3246--3264. {PMLR}, 2021.

\bibitem[Marcinkowski et~al.(2020)Marcinkowski, Kieslich, Starke, and
  L{\"u}nich]{marcinkowski2020implications}
F.~Marcinkowski, K.~Kieslich, C.~Starke, and M.~L{\"u}nich.
\newblock Implications of ai (un-) fairness in higher education admissions: the
  effects of perceived ai (un-) fairness on exit, voice and organizational
  reputation.
\newblock In \emph{Proceedings of the Conference on Fairness, Accountability,
  and Transparency}, pages 122--130, 2020.

\bibitem[Micha and Shah(2020)]{micha2020proportionally}
E.~Micha and N.~Shah.
\newblock Proportionally fair clustering revisited.
\newblock In \emph{47th International Colloquium on Automata, Languages, and
  Programming (ICALP 2020)}. Schloss Dagstuhl-Leibniz-Zentrum f{\"u}r
  Informatik, 2020.

\bibitem[R{\"{o}}sner and Schmidt(2018)]{rosner18privacy}
C.~R{\"{o}}sner and M.~Schmidt.
\newblock Privacy preserving clustering with constraints.
\newblock In \emph{Proceedings of the International Colloquium on Automata,
  Languages, and Programming}, pages 96:1--96:14, 2018.

\bibitem[Schmidt et~al.(2019)Schmidt, Schwiegelshohn, and
  Sohler]{schmidt2019fair}
M.~Schmidt, C.~Schwiegelshohn, and C.~Sohler.
\newblock Fair coresets and streaming algorithms for fair {$k$}-means.
\newblock In \emph{Proceedings of the International Workshop on Approximation
  and Online Algorithms}, pages 232--251, 2019.

\bibitem[Tamir(1996)]{Tamir96}
A.~Tamir.
\newblock An $o(pn^2)$ algorithm for the $p$-median and related problems on
  tree graphs.
\newblock \emph{Operations Research Letters}, 19\penalty0 (2):\penalty0 59--64,
  1996.

\bibitem[Vakilian and Yal{\c{c}}{\i}ner(2021)]{vakilian2021improved}
A.~Vakilian and M.~Yal{\c{c}}{\i}ner.
\newblock Improved approximation algorithms for individually fair clustering.
\newblock \emph{arXiv preprint arXiv:2106.14043}, 2021.

\end{thebibliography}
%%\newpage
%%\appendix %%moved to main body
%\onecolumn
%%\input{appendix}

\end{document}